\newtheorem{theorem}{Theorem}[section]
\newtheorem{corollary}[theorem]{Corollary}
\newtheorem{lemma}[theorem]{Lemma}
\newtheorem{observation}[theorem]{Observation}
\def\newclaim#1#2{
   \global\advance\claimno by 1\relax
   \bigskip\noindent\rlap{\rm(\the\claimno)}\ignorespaces
   \global\expandafter\edef\csname CLAIMLABEL#1\endcsname{\the\claimno}\relax
   \hangindent=33pt\hskip30pt{\sl#2}\bigskip}
\def\mylabel#1{{\label{#1}}}
\def\junk#1{}
\begin{document}
\title{Three-coloring triangle-free graphs on surfaces VII.  A linear-time algorithm\thanks{September 2020}}
\author{%
     Zden\v{e}k Dvo\v{r}\'ak\thanks{Computer Science Institute (CSI) of Charles University,
           Malostransk{\'e} n{\'a}m{\v e}st{\'\i} 25, 118 00 Prague, 
           Czech Republic. E-mail: {\tt rakdver@iuuk.mff.cuni.cz}.
	   Supported by project GA14-19503S (Graph coloring and structure) of Czech Science Foundation.}
 \and
     Daniel Kr{\'a}l'\thanks{Faculty of Informatics,
            Masaryk University, Botanick\'a 68A, 602 00 Brno, Czech Republic, and
            Mathematics Institute, DIMAP and Department of Computer Science, University
            of Warwick, Coventry CV4 7AL, UK. E-mail: {\tt dkral@fi.muni.cz}. The
            previous affiliation: Computer Science Institute (CSI) of Charles University.}
 \and
        Robin Thomas\thanks{School of Mathematics, 
        Georgia Institute of Technology, Atlanta, GA 30332. 
        E-mail: {\tt thomas@math.gatech.edu}.
        Partially supported by NSF Grants No.~DMS-0739366 and DMS-1202640.}
}
\date{}
\maketitle
\begin{abstract}
We give a linear-time algorithm to decide $3$-colorability of a triangle-free graph embedded in a fixed surface,
and a quadratic-time algorithm to output a $3$-coloring in the affirmative case.
The algorithms also allow to prescribe the coloring of a bounded number of vertices.
\end{abstract}

\section{Introduction}

This paper is the last part of a series aimed at studying the $3$-colorability
of graphs on a fixed surface that are either triangle-free, or have their
triangles restricted in some way (throughout the paper, all colorings are proper, i.e., adjacent vertices
have different colors).  The main result of this paper is a linear-time algorithm to decide $3$-colorability of a triangle-free graph
embedded in a fixed surface.
Embeddability in a surface is not a sufficient restriction by itself, as
$3$-colorability of planar graphs is NP-complete~\cite{garey1979computers}.  Restricting the triangles
is natural in the light of the well-known theorem of Gr\"otzsch~\cite{grotzsch1959} stating that every planar triangle-free graph is $3$-colorable.

A graph $G$ is \emph{$4$-critical} if every proper subgraph of $G$ is $3$-colorable but $G$ itself is not.
Clearly, a graph is $3$-colorable if and only if it has no $4$-critical subgraph.
As was shown by Thomassen~\cite{thomassen-surf} and later (with better bounds) by us~\cite{trfree3},
for every surface $\Sigma$, there are only finitely many $4$-critical graphs of girth at least $5$ that can be embedded in $\Sigma$.
Hence, to decide whether a graph of girth at least $5$ embeddable in $\Sigma$ is $3$-colorable, it suffices to test the presence
of these finitely many subgraphs, which can be done in linear time using the algorithm of Eppstein~\cite{bib-eppstein99}.

The situation is more complicated for triangle-free graphs.  
The Mycielski graph of an odd cycle embeds in any surface other than the sphere.  Furthermore,
Youngs~\cite{Youngs} gave more general infinite families of
$4$-critical triangle-free graphs embeddable in any non-orientable surface.
However, in~\cite{trfree4}
we showed that most of the faces of 4-critical triangle-free graphs drawn in a fixed surface are of length $4$.  In order to state the result precisely,
let us first give some definitions.

A \emph{surface} is a compact connected $2$-manifold with (possibly null) boundary.
In a graph drawn in a surface with a boundary, we require that the drawing of each edge is either completely contained in the boundary,
or disjoint from the boundary except possibly for its endpoints.
Each component of the boundary of a surface is homeomorphic to a circle, and we call it a \emph{cuff}.  For non-negative integers $a$, $b$ and $c$,
let $\Sigma(a,b,c)$ denote the surface obtained from the sphere by adding $a$ handles, $b$ crosscaps and
removing the interiors of $c$ pairwise disjoint closed discs.  The classification theorem of surfaces shows that
every surface is homeomorphic to $\Sigma(a,b,c)$ for some choice of $a$, $b$ and $c$.
The \emph{Euler genus} $g(\Sigma)$ of a surface $\Sigma$ homeomorphic to $\Sigma(a,b,c)$ is defined as $2a+b$.
Consider a graph $G$ embedded in the surface $\Sigma$; when useful, we identify $G$ with the topological
space consisting of the points corresponding to the vertices of $G$ and the simple curves corresponding
to the edges of $G$.  A \emph{face} $f$ of $G$ is a maximal connected subset of $\Sigma-G$.
By the \emph{length} $|f|$ of $f$, we mean the sum of the lengths of the boundary walks of $f$ (in particular, if an edge
appears twice in the boundary walks, it contributes $2$ to $|f|$).
A face $f$ is \emph{$2$-cell} if it is homeomorphic to an open disk, and it is \emph{closed $2$-cell} if additionally
its boundary forms a cycle in $G$.

Finally, we are ready to state the result from~\cite{trfree4}.

\begin{theorem}[{\cite[Theorem~1.3]{trfree4}}]\mylabel{thm:ctvrty}
There exists a constant $\kappa$ with the following property.  Let $G$ be a graph embedded in
a surface of Euler genus $g$.  Let $t$ be the number of triangles in $G$ and let $c$ be the number of
$4$-cycles in $G$ that do not bound a $2$-cell face.  If $G$ is $4$-critical, then
$$\sum_{\text{$f$ face of $G$}} (|f|-4)\le \kappa(g+t+c-1).$$
\end{theorem}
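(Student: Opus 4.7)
\medskip
\noindent\textbf{Proof proposal.}
The plan is to combine a surgical reduction (cutting the surface along short ``obstructions'') with a discharging argument on the residual embedding. A straightforward Euler formula computation already shows
$$\sum_{f} (|f|-4) = 2|E(G)| - 4|F(G)| = 4|V(G)| - 2|E(G)| + 4g - 8,$$
assuming the embedding is $2$-cell; so the real content of the theorem is a bound of the form $2|V(G)|-|E(G)| \le O(g+t+c)$, which for a triangle-free graph is a statement about the edge density. Thus the first step is to reduce to a $2$-cell embedding: every non-$2$-cell face either contains a handle/crosscap that can be removed (decreasing $g$) or is bounded by a separating curve along which $\Sigma$ splits into two pieces, and one can argue by induction on $g$ using $4$-criticality (a proper subgraph on one side is $3$-colorable, and its coloring can be matched with a coloring on the other side).

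Next I would peel off the $t$ triangles and the $c$ non-facial $4$-cycles one at a time. Each such short cycle $C$ is cut: either $C$ is separating or non-separating, and in either case cutting along $C$ produces one or two graphs on surfaces of strictly smaller complexity, each of which contains a $4$-critical subgraph to which induction applies. The cost of reattaching (a few extra vertices/edges along the cut, and the charge accounting for the now-boundary cycle) is absorbed into an additive constant per short cycle, yielding the $\kappa(t+c)$ contribution. After this reduction we may assume $G$ is triangle-free and every $4$-cycle bounds a $2$-cell face.

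For the remaining base case, the task is to bound $\sum_f(|f|-4)$ by $O(g)$ for a $4$-critical triangle-free graph $G$ in $\Sigma$ whose every $4$-cycle bounds a $4$-face. Here I would invoke the structural results from the earlier papers in this series: $4$-critical triangle-free graphs admit a decomposition into ``quadrangulated'' pieces joined through a bounded number of faces of length $\ge 5$, with the quantity of ``excess'' controlled by the Euler genus. Concretely, assign to each face $f$ the charge $|f|-4$; faces of length $4$ receive $0$, so only faces of length $\ge 5$ contribute positively. Using criticality, one shows a discharging rule that moves charge from $\ge 5$-faces to local witnesses of non-quadrangulation, and an averaging argument bounds the total such witnesses by a linear function of $g$ (via Euler's formula applied to the ``quotient'' obtained by collapsing each $4$-face).

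The principal obstacle is the last step: translating $4$-criticality into a usable local rule that forces long faces to be few in number once triangles and non-facial $4$-cycles are absent. This is where the structural machinery of the preceding papers in the series is essential, in particular the classification of ``critical patches'' and the fact that any long face must be adjacent to some topological feature (a handle, a crosscap, or a boundary cuff induced by the surgery in the first two steps). Once this local forcing is established, the discharging computation is routine, and combining it with the accounting from the surgical reductions gives the claimed bound with an explicit constant $\kappa$ depending only on the constants from the earlier structural theorems.
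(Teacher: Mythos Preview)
This theorem is not proved in the present paper at all; it is quoted verbatim from~\cite{trfree4} (Theorem~1.3 there) and used as a black box. So there is no ``paper's own proof'' to compare against here, and any genuine proof must reproduce the substantial work of~\cite{trfree4}.

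That said, your outline has two real gaps. First, the induction in your surgical step does not work as stated. When you cut a $4$-critical graph $G$ along a short cycle $C$, the resulting piece or pieces are \emph{not} $4$-critical: they are $3$-colorable proper subgraphs of $G$, now carrying a precolored boundary. The correct inductive framework is the notion of $B$-critical graphs (criticality relative to a precolored boundary subgraph $B$), which is exactly what~\cite{trfree4} sets up; your sentence ``each of which contains a $4$-critical subgraph to which induction applies'' is false in general, and fixing it forces you to prove the stronger boundary-relative statement (essentially Theorem~\ref{thm-almstruct} here, or Corollary~5.4 of~\cite{trfree4}) from the start.

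Second, your step~3 is where all the content lies, and your proposal there is circular. You write that you would ``invoke the structural results from the earlier papers in this series'' and that ``the structural machinery of the preceding papers \ldots\ is essential''---but the theorem you are trying to prove \emph{is} that structural result from the earlier paper~\cite{trfree4}. The actual argument in~\cite{trfree4} is a lengthy discharging proof with a carefully designed weight function (the function $s$ reproduced in Section~2 here) and a catalogue of reducible configurations; the assertion that ``any long face must be adjacent to some topological feature'' is precisely the hard part and does not follow from Euler's formula or from any soft averaging. Your sketch identifies the right shape of the proof (reduce to the boundary-critical, girth-$\ge 5$ case, then discharge), but supplies none of the mechanism that makes the discharging terminate.
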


Furthermore, in the previous paper of the series~\cite{trfree6}, we designed a $3$-coloring algorithm for
graphs with almost all faces of length $4$.  
A graph $H$ is a \emph{quadrangulation} of a surface $\Sigma$ if all faces of $H$ are closed $2$-cell and have length $4$
(in particular, the boundary of $\Sigma$ is formed by a set of pairwise vertex-disjoint cycles in $H$, called the \emph{boundary cycles} of $H$).
A vertex of $G$ contained in the boundary of $\Sigma$ is called a \emph{boundary vertex}.

\begin{theorem}[{\cite[Theorem~1.3]{trfree6}}]\label{thm-quadalg}
For every surface $\Sigma$ and integer $k$, there exists a linear-time algorithm with input
\begin{itemize}
\item $G$: a quadrangulation of $\Sigma$ with at most $k$ boundary vertices, and
\item $\psi$: a function from boundary vertices of $G$ to $\{1,2,3\}$,
\end{itemize}
which correctly decides whether there exists a $3$-coloring $\varphi$ of $G$ such that $\varphi(v)=\psi(v)$ for every boundary vertex $v$ of $G$.
In the affirmative case, the algorithm also outputs such a coloring $\varphi$.
\end{theorem}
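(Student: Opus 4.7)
The plan is to exploit the fact that in any $3$-coloring $\varphi$ of a quadrangulation, each $4$-face $v_1v_2v_3v_4$ forces exactly one of $\varphi(v_1)=\varphi(v_3)$ or $\varphi(v_2)=\varphi(v_4)$, so $3$-colorings are encoded by a ``diagonal type'' per face subject to consistency along edges. Once $\psi$ is fixed on the boundary, extensions of $\psi$ correspond to solutions of a $\mathbb{Z}_2$-system whose global solvability is governed by a homological obstruction on $\Sigma$ with the boundary cycles removed. The overall strategy is thus to reduce, in linear time, to a graph of size bounded in terms of $\Sigma$ and $k$ alone on which this system can be solved by brute force, and then to lift any solution back.

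The reduction would have two ingredients. The first is topological: if the current surface has a short closed curve that is non-contractible or that separates the boundary non-trivially, cut $G$ along such a curve. This strictly decreases $g(\Sigma)+k$, introduces only $O(1)$ new boundary vertices, and admits only constantly many candidate colorings across the cut, so after $O(g+k)$ cuts the remaining pieces have edge-width larger than any fixed constant. The second ingredient is combinatorial: in a quadrangulation of large edge-width, a constant fraction of the $4$-faces are \emph{reducible} in the sense that one may identify an opposite pair of vertices without creating a triangle, a parallel edge, or a shorter non-contractible curve, and the resulting graph is again a quadrangulation of $\Sigma$ whose $3$-colorings extending $\psi$ are in bijection with those of the original. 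Performing all such contractions simultaneously shrinks $G$ by a constant factor, and iterating yields a kernel of bounded size that can be attacked by brute force.

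The hardest part is making this run in linear rather than merely polynomial time. One must argue that after the topological preprocessing enough $4$-faces remain \emph{simultaneously} reducible so that each pass achieves constant-factor shrinkage, and one must maintain the embedding and the current set of short non-contractible cycles under bulk contractions without recomputing them from scratch. Standard surface-algorithmic primitives (linear-time embedding, shortest non-contractible cycle, incremental rotation-system maintenance) make this feasible, giving $O(|V(G)|)$ for the decision version. To output a $3$-coloring in the affirmative case, each contraction is undone in $O(1)$ time by copying the color of the identified vertex to both preimages and letting their face-neighbors fill in uniquely from the remaining two colors; summed over the sequence of contractions this lifting is linear, so the output version also runs in linear time.
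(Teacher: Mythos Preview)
First, observe that this theorem is not proved in the present paper at all: it is quoted from the previous paper of the series and used as a black box (see the citation to \cite[Theorem~1.3]{trfree6}). So there is no proof here to compare against; one can only judge your sketch on its own merits and against what can be inferred about the argument in \cite{trfree6} from the auxiliary lemmas restated here (Lemmas~\ref{lemma-dataemb}, \ref{lemma-elim24}, \ref{lemma-qstruc}, \ref{lemma-splitcyl}).

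On its own merits, the sketch has real gaps. The opening claim that a $4$-face forces \emph{exactly one} of $\varphi(v_1)=\varphi(v_3)$ or $\varphi(v_2)=\varphi(v_4)$ is false: the coloring $1,2,1,2$ satisfies both. Hence the ``diagonal type'' of a face is not a well-defined function of the coloring, and the asserted bijection between extensions of $\psi$ and solutions of a $\mathbb{Z}_2$-system is unsupported. A clean homological parity characterizes $3$-colorability of quadrangulations only on the projective plane (Youngs); on general surfaces no such single $\mathbb{Z}_2$-obstruction is known, and indeed the structure of non-$3$-colorable quadrangulations is considerably richer. The second ingredient fails for the same reason: identifying the opposite pair $v_1,v_3$ in a $4$-face produces a quadrangulation whose $3$-colorings correspond to colorings of $G$ with $\varphi(v_1)=\varphi(v_3)$, not to all $3$-colorings of $G$. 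Your claimed bijection is therefore false, and to decide existence you would have to branch on which diagonal to contract, with nothing in the outline to prevent exponential blow-up.

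As far as one can tell from the lemmas imported here, the actual proof in \cite{trfree6} proceeds quite differently: cut along short non-contractible cycles and cycles surrounding cuffs (Lemmas~\ref{lemma-qstruc} and \ref{lemma-splitcyl}) to reduce to pieces with bounded boundary on simpler surfaces, handle the cylinder by dynamic programming over the constantly many colorings of the cut cycles, and combine. There is no global $\mathbb{Z}_2$ reduction and no contraction-to-kernel scheme of the kind you describe.
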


By combining Theorems~\ref{thm:ctvrty} and \ref{thm-quadalg}, we obtain a straightforward algorithm to test $3$-colorability
of a triangle-free graph embedded in a fixed surface, at least under the assumption that all $4$-cycles in $G$ bound $2$-cell
faces---enumerate all subgraphs $H$ of $G$ such that 
\begin{equation}\label{eq-shf}
\sum_{\text{$h$ face of $H$}} (|h|-4)\le \kappa g
\end{equation}
and test whether they are $3$-colorable.  It can be shown that there are at most $|V(G)|^{5\kappa g}$ subgraphs of $G$ satisfying (\ref{eq-shf}),
and thus for any fixed surface, we obtain a polynomial-time algorithm.  However, the exponent of the polynomial bounding the complexity of
this algorithm depends on the surface.  In this paper, we use a more involved argument to design a linear-time algorithm deciding
$3$-colorability.  Furthermore, similarly to Theorem~\ref{thm-quadalg}, we can allow a bounded number of precolored vertices in
the considered graph.  The following is our first main result.

\begin{theorem}\label{thm-mainalg}
For every surface $\Sigma$ and integer $k$, there exists a linear-time algorithm with input
\begin{itemize}
\item $G$: a triangle-free graph embedded in $\Sigma$ with at most $k$ boundary vertices, and
\item $\psi$: a function from boundary vertices of $G$ to $\{1,2,3\}$,
\end{itemize}
which correctly decides whether there exists a $3$-coloring $\varphi$ of $G$ such that $\varphi(v)=\psi(v)$ for every boundary vertex $v$ of $G$.
\end{theorem}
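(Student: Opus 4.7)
I would combine Theorem~\ref{thm:ctvrty} with Theorem~\ref{thm-quadalg} through a sequence of structural reductions that eliminate, in linear total time, all obstructions to a quadrangulation-based approach.

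The first phase reduces to instances in which every $4$-cycle of $G$ bounds a $2$-cell face. Suppose $G$ contains a $4$-cycle $C$ that does not bound a $2$-cell face. Then either $C$ is non-contractible (so cutting along $C$ strictly decreases the Euler genus), or $C$ separates $\Sigma$ into two subsurfaces. In either case I cut $\Sigma$ along $C$, branch over the (at most $3^4$) proper $3$-colorings of $V(C)$ compatible with $\psi$, add $V(C)$ to each side as precolored boundary vertices, and recurse. Induction on the pair (Euler genus, number of ``bad'' $4$-cycles) together with the observation that the number of precolored boundary vertices stays bounded throughout shows that only constantly many such cuts occur, so the reduction runs in linear time and produces a bounded number of subinstances satisfying our hypothesis.

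Now assume every $4$-cycle of $G$ bounds a $2$-cell face. Then $G$ admits an extension of $\psi$ to a $3$-coloring iff $G$ contains no $4$-critical subgraph $H$ with $V(H)$ meeting the boundary in a set compatible with $\psi$. For any such $H$, Theorem~\ref{thm:ctvrty} with $t=c=0$ gives
\[
\sum_{\text{$h$ face of $H$}} (|h|-4) \le \kappa(g-1),
\]
so $H$ has only $O(g)$ faces of length more than $4$ and these faces cover only $O(g)$ vertices. Intuitively, away from a bounded ``core'', $H$ looks like a quadrangulation, and once the core is fixed together with its coloring, Theorem~\ref{thm-quadalg} can complete or refute the coloring of the rest.

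The crux, and the main obstacle, is to enumerate cores in \emph{linear} time rather than polynomial time with exponent depending on $g$. I would approach this by first performing linear-time reductions that contract or replace ``quadrangulated cylinders'' and chains of degree-$2$ vertices by constant-size gadgets preserving $3$-colorability (extending gadget lemmas developed in earlier papers of the series). Combined with the absence of short non-contractible cycles (ensured by a further cut-and-branch step similar to the first phase) and with Theorem~\ref{thm:ctvrty}, these reductions would shrink $G$ to a graph $G'$ of size bounded purely in $\Sigma$ and $k$, on which the enumeration of $4$-critical subgraphs becomes constant time. Each surviving candidate core $H'$ in $G'$ lifts back to a bounded region in the original $G$, leaving a quadrangulation with a bounded number of precolored boundary vertices to which Theorem~\ref{thm-quadalg} applies. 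The technically hardest part I expect will be proving that the contraction/replacement gadgets can be designed so that they interact correctly with \emph{all} possible cores simultaneously: one must show that no $4$-critical subgraph of $G$ is destroyed or spuriously created by the reductions, and that the bounded-size reduced graph $G'$ faithfully records $3$-colorability of $G$ relative to $\psi$.
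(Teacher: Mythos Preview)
Your first phase already breaks down. You claim that cutting along $4$-cycles that do not bound $2$-cell faces terminates after constantly many cuts, by induction on (Euler genus, number of bad $4$-cycles). But the second coordinate need not be bounded: in a cylinder one can have $\Theta(|V(G)|)$ pairwise homotopic non-contractible $4$-cycles, and cutting along one yields two cylinders, each of genus $0$ with two cuffs, still containing the remaining cycles. So neither the genus nor the cuff count drops, and you would perform $\Theta(|V(G)|)$ cuts; with a $3^{4}$-way branch at each cut the recursion tree is exponential, and even with dynamic programming you would need to argue something nontrivial. The paper treats exactly this case separately (Lemma~\ref{lemma-splitcyl} and Theorem~\ref{thm-mainalg-spec4}): it lists a maximal nested sequence of short non-contractible cycles in the cylinder and combines the pieces by dynamic programming along that sequence, rather than cutting and branching.

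For the second phase, your plan is to shrink $G$ to bounded size by contracting ``quadrangulated cylinders'' and then brute-force all cores; you correctly identify this gadget-correctness step as the hardest, but you have not supplied it, and the paper does \emph{not} go this way. Instead of enumerating cores, the paper introduces the notion of a \emph{$k$-free} set of faces (Section~\ref{sec-freedom}): a set $S$ such that no closed walk of length at most $k$ bounds a disk capturing at least $s(|W|)$ of the weight of $S$. Lemma~\ref{lemma-disk} shows that a sufficiently heavy free set is a direct certificate that every precoloring of the boundary extends, with no need to know which subgraph would otherwise be critical. Lemma~\ref{lemma-liberate} then gives a linear-time procedure that, starting from any $G$, either produces such a certificate or reduces $G$ to a subgraph $G'$ of bounded total weight $w_0(G')\le r$, to which Theorem~\ref{thm-quadalg} applies after enumerating the boundedly many vertices on long faces. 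This replaces your ``enumerate all cores'' step entirely. Two further points: the bound you invoke from Theorem~\ref{thm:ctvrty} concerns $4$-critical graphs, whereas with precolored boundary you need the relative version, which is Theorem~\ref{thm-almstruct}; and essential subgraphs are handled not by further cutting-and-branching but by the fixed subgraph $H$ of Lemma~\ref{lemma-qstruc}, after which each piece $G_h$ has no small essential subgraph and the freedom argument applies.
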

The algorithm of Theorem~\ref{thm-mainalg} assumes the embedding of $G$ in the surface $\Sigma$ is given as a part of the input.
However, since the surface $\Sigma$ is fixed, such an embedding of an abstract graph can be found (or shown not to
exist) in linear time using an algorithm of Mohar~\cite{mohar1999linear}.  In order to ensure that the precolored vertices are incident with the
boundary, we can then drill holes next to them, if needed.  We also need to specify how the embedding of $G$ is given.
We use a variant of the polygonal representation---we choose a ``cut graph'' $H$, where $H$ has exactly one face, this face is $2$-cell,
every cuff of $\Sigma$ is traced by a cycle in $H$, and every edge of $H$ is either equal to an edge of $G$, or its drawing intersects $G$ only in vertices.
Then, we cut $\Sigma$ and $G$ along the edges of $H$, and we represent the resulting graph drawn in a disk.  More details can be found in~\cite{trfree6},
where we developed all the subroutines for which the details of the representation are relevant.

Note that unlike Theorem~\ref{thm-quadalg}, Theorem~\ref{thm-mainalg} does not return a $3$-coloring
if one exists; indeed, the algorithm only decides whether there exists a critical subgraph.
Let us remark that this a quite common situation; even in the case of planar graphs, a linear-time algorithm to actually find a $3$-coloring
guaranteed by Gr\"otzsch's theorem was only designed recently~\cite{DvoKawTho}.
In Section~\ref{sec-repalg}, we show that using Theorem~\ref{thm-mainalg}, we can in quadratic time reduce the problem of finding a $3$-coloring
to the case that the graph has girth at least five.  This case can be dealt with by algorithmizing the ideas of~\cite{trfree3},
or more conveniently using an argument of Postle~\cite{postledistr-focs}.
Hence, we obtain the following algorithm.

\begin{theorem}\label{thm-mainalgrep}
For every surface $\Sigma$ and integer $k$, there exists a quadratic-time algorithm with input
\begin{itemize}
\item $G$: a triangle-free graph embedded in $\Sigma$ with at most $k$ boundary vertices, and
\item $\psi$: a function from boundary vertices of $G$ to $\{1,2,3\}$,
\end{itemize}
which correctly decides whether there exists a $3$-coloring $\varphi$ of $G$ such that $\varphi(v)=\psi(v)$ for every boundary vertex $v$ of $G$,
and outputs such a coloring in the affirmative case.
\end{theorem}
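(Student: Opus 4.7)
The plan is to exploit the structure of the proof of Theorem~\ref{thm:ctvrty}, which proceeds by discharging and reducible configurations. First, I would apply Theorem~\ref{thm-mainalg} in linear time to decide whether a valid coloring exists; if not, report this and stop. If a coloring exists, the task is to produce one, which I propose to do by repeatedly reducing the instance to smaller ones until it is small enough to solve directly, then lifting the resulting partial colorings back through the reductions.

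The discharging argument establishing Theorem~\ref{thm:ctvrty} shows that whenever the face-excess $\sum_f (|f|-4)$ exceeds $\kappa(g+t+c-1)$, the graph $G$ must contain one of a finite list of local \emph{reducible configurations}. For each such configuration $C$, I expect a corresponding reduction rule producing a smaller instance $(G',\psi')$ that is still triangle-free, is embedded in a surface of no larger Euler genus, has only a bounded number of precolored vertices, and has the property that any $3$-coloring of $G'$ extending $\psi'$ lifts to a $3$-coloring of $G$ extending $\psi$. The algorithm would use Eppstein's subgraph enumeration~\cite{bib-eppstein99} to locate such a configuration in linear time, perform the reduction, and recurse.

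After $O(|V(G)|)$ such reductions, the face-excess falls below $\kappa(g+t+c-1)$. At that point, by Theorem~\ref{thm:ctvrty} only ``almost-quadrangulation'' structure remains; enumerating the bounded number of exceptional subgraphs lets me reduce the base case to instances addressable by Theorem~\ref{thm-quadalg}, which supplies a coloring. Finally, I would invert the recorded reductions in reverse order to obtain a $3$-coloring of the original $(G,\psi)$. Since each reduction takes $O(|V(G)|)$ time and removes at least one vertex, the total running time is $O(|V(G)|^2)$, as required.

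The main obstacle will be converting the existential discharging proof of Theorem~\ref{thm:ctvrty} into an explicit algorithmic catalog of reducible configurations with constructive extension rules, and verifying three properties for each rule: the reduction preserves triangle-freeness and the surface embedding; the number of precolored vertices stays bounded in terms of $k$ and $g(\Sigma)$ so that Theorem~\ref{thm-mainalg} can still be invoked when needed; and lifting a $3$-coloring through a reduction can be performed in constant time. Configurations that meet the boundary of $\Sigma$ will require separate treatment, since they interact nontrivially with $\psi$. This conversion is precisely the work that the paper defers to Section~\ref{sec-repalg}.
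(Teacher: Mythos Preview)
Your outline captures the high-level shape of the argument, but there is a genuine gap that the paper works hard to close. A \emph{reducible configuration} only guarantees one direction: any $3$-coloring of the reduced graph $G'$ extending $\psi$ lifts back to $G$. It does \emph{not} guarantee that $G'$ is $3$-colorable whenever $G$ is. In the paper's proof (see the end of the proof of Lemma~\ref{lemma-colorgirth5}), after performing the reduction the algorithm explicitly tests whether $\psi$ extends to $G'$, and when it does not, a substantial side argument is needed. In that case Theorem~\ref{thm-mainalg-spec} returns a small witness $H'\subseteq G'$ to which $\psi$ does not extend; because all essential subgraphs have been cut away in advance, $H'$ is near-planar, and the last two clauses of Theorem~\ref{thm-redconf} are then used to lift $H'$ to a subgraph $H$ of $G$ with bounded face lengths. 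Only then can one recurse on the pieces $G_h$. Your proposal never confronts this case, so the recursion as you describe it could simply stall at a $G'$ that admits no extension of $\psi$.

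A second, smaller divergence concerns $4$-faces. The reducible configurations behind the discharging proof (Theorem~\ref{thm-redconf}, imported from~\cite{trfree2}) are for graphs of girth at least~$5$; the paper handles $4$-faces by a separate reduction, identifying an opposite pair $v_1,v_3$ in a $4$-face $v_1v_2v_3v_4$. Crucially, the choice of which pair to identify is made by first invoking Theorem~\ref{thm-mainalg} on $B+v_1v_2v_3v_4$ to find a coloring of the $4$-cycle that extends to all of $G$, and then identifying the pair receiving the same color. Your proposal folds $4$-faces into the generic ``reducible configuration'' machinery, which would again require that the reduction preserve colorability; without the precoloring step just described, identifying the wrong pair can produce a non-$3$-colorable $G'$ even when $G$ is $3$-colorable.
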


Let us remark that we believe that there exists a linear-time algorithm to output a $3$-coloring using ideas similar to those
of Theorem~\ref{thm-mainalg}; however, there are significant technical challenges in designing it and we leave this as an open
problem.  Let us also remark that in Theorems~\ref{thm-mainalg} and \ref{thm-mainalgrep}, it would suffice to only forbid the existence of contractible
triangles, as we can deal with the non-contractible ones by cutting the surface along them (see Theorem~\ref{thm-mainalg-spec4} for
a similar idea used to eliminate non-contractible $4$-cycles).

In the following section, we recall the results and definitions from the previous papers of the series we are going to need.
In Section~\ref{sec-freedom}, we define a key notion of a free set of faces and apply it to the special case of graphs
embedded in the disk.
In Section~\ref{sec-linalg}, we give the linear-time decision algorithm.  Finally, in Section~\ref{sec-repalg}, we give the
algorithm to output a $3$-coloring if one exists.

\section{Definitions and previous results}

We need a stronger form of Theorem~\ref{thm:ctvrty} which deals with graphs with precolored cycles.  First, let us give several definitions.
Suppose that a graph $G$ is embedded in a surface $\Sigma$ so that every cuff of $\Sigma$ traces a cycle in $G$,
let $H$ be a subgraph of $G$, and let $h$ be a face of $H$.  We would like to view the part $G'$ of $G$ drawn in the closure
of $h$ as drawn in the surface $\Sigma_h$ whose interior is homeomorphic to $h$.  There is a minor technical issue to overcome;
consider e.g. the case that $\Sigma$ is the torus and $h$ is an open cylinder bounded by two non-contractible homotopic cycles
intersecting in a path $Q$.  Then, to view $G'$ as a graph drawn in the cylinder, we need to cut the surface along the two cycles,
naturally splitting $Q$ into two paths in the process.  More precisely, we let $\theta_h:\Sigma_h\to \Sigma$ be a continuous function whose restriction to the interior of $\Sigma_h$
is a homeomorphism to $h$, and we define $G_h=\theta_h^{-1}(G)$.

Let $s:\mathbf{Z}^+\to\mathbf{R}$ be defined by
$$s(n)=\begin{cases}
0&\text{if } n\le 4\\
4/4113&\text{if } n=5\\
72/4113&\text{if } n=6\\
540/4113&\text{if } n=7\\
2184/4113&\text{if } n=8\\
n-8&\text{if } n\ge 9.
\end{cases}$$
To each $2$-cell face $f$ of $G$, we assign a weight $w_0(f)=s(|f|)$.
If $f$ is not $2$-cell, then let $w_0(f)=|f|$.
For a surface $\Pi$ of Euler genus $g$ with $c$ cuffs, let $s(\Pi)=6c-6$ if $g=0$ and $c\le 2$, and $s(\Pi)=120g+48c-120$ otherwise.
For a real number $\eta$ and a face $f$ of $G$, let $w_\eta(f)=w_0(f)+\eta s(\Sigma_f)$.
Let $$w_\eta(G)=\sum_{\text{$f$ face of $G$}} w_\eta(f).$$

\begin{theorem}[{\cite[Lemma~5.2]{trfree6}}]\label{thm-almstruct}
There exists a constant $\eta>0$ such that the following holds.
Let $G$ be a triangle-free graph embedded in a surface $\Sigma$ without non-contractible $4$-cycles, so that every cuff of $\Sigma$ traces a cycle in $G$, and
let $B$ be the union of boundary cycles of $G$.  There exists a subgraph $H$ of $G$ such that $B\subseteq H$,
$w_\eta(H)\le w_\eta(B)$ and for every face $h$ of $H$, every $3$-coloring of the boundary of $h$ extends to a $3$-coloring of $G_h$.
\end{theorem}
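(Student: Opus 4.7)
The plan is to pick, among all subgraphs $H$ of $G$ containing $B$ that satisfy the extension property of the conclusion, one that minimizes $w_\eta(H)$. Such an $H$ exists because the collection is nonempty: $H = G$ itself qualifies trivially, since for every face $h$ of $G$ the graph $G_h$ is just $\partial h$ and any precoloring of the boundary ``extends'' vacuously. I then intend to prove $w_\eta(H) \le w_\eta(B)$ by exploiting the minimality of $H$ face by face.

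My strategy inside each face is to invoke a precolored version of Theorem~\ref{thm:ctvrty}. If $h$ is a face of $H$ whose interior meets $H \setminus B$, then minimality of $H$ forces a \emph{critical witness} inside $h$: a subgraph $C \subseteq G_h$ containing $\partial h$, together with a $3$-coloring $\psi$ of $\partial h$, such that $\psi$ does not extend to a $3$-coloring of $C$ but extends to every proper subgraph of $C$ containing $\partial h$. Such precolored-critical graphs have been studied throughout this series, and Theorem~\ref{thm:ctvrty} applied to $C$ embedded in $\Sigma_h$ (after the standard trick of contracting or identifying precolored boundary cycles to reduce to the genuinely $4$-critical case) bounds $\sum_{f}(|f|-4)$ by a linear function of $g(\Sigma_h)$ and the number of cuffs: the hypotheses that $G$ is triangle-free and contains no non-contractible $4$-cycle force the parameters $t$ and $c$ in Theorem~\ref{thm:ctvrty} to vanish.

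The weight accounting then proceeds face by face. For each face $h$ of $H$, I would compare $w_\eta(h)$ against the contributions that the critical witnesses inside $\overline{h}$ make, using the face-length bound from the previous paragraph together with the topological correction $\eta\, s(\Sigma_h)$. The piecewise values $s(5)=4/4113$, $s(6)=72/4113$, $s(7)=540/4113$, $s(8)=2184/4113$ of $s$ on short faces, and the piecewise definition of $s(\Pi)$ for the topological correction (splitting at $g=0$, $c\le 2$), are tuned precisely so that this per-face estimate aggregates into $w_\eta(H) \le w_\eta(B)$ provided $\eta$ is taken sufficiently small but positive.

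I expect the main obstacle to be this last weight bookkeeping: verifying that the specific fractional values of $s$ on faces of length $5,6,7,8$ and the constant $\eta$ are compatible with the universal constant $\kappa$ of Theorem~\ref{thm:ctvrty} so that the inequality holds with essentially no slack. A secondary difficulty is the treatment of non-$2$-cell faces $h$: there $\Sigma_h$ carries non-trivial topology or multiple cuffs, and the critical-witness argument must be paired with a topological splitting step (cutting along a non-contractible non-separating curve and recursing into lower Euler genus), where one must check that the two-regime formula for $s(\Pi)$ absorbs the increase in the number of cuffs without breaking the global inequality.
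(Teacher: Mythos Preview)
This theorem is not proved in the present paper: as the bracketed citation in its header indicates, it is quoted from the previous installment of the series (Lemma~5.2 of \cite{trfree6}). There is therefore no proof here against which to compare your proposal.

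As for the proposal on its own merits: the overall strategy---extract critical witnesses and feed them into the quantitative face-length bound of Theorem~\ref{thm:ctvrty}---is indeed the engine behind results of this shape in the series. However, your minimality set-up does not straightforwardly deliver the per-face witnesses you claim. You minimize $w_\eta(H)$ over all $H\supseteq B$ with the extension property, and then assert that for each face $h$ of $H$ ``whose interior meets $H\setminus B$'' minimality yields a critical obstruction inside $h$. But face interiors are disjoint from the graph by definition, so that clause is vacuous; and more substantively, deleting an edge of $H$ merges two faces of $H$ into a single face of the smaller graph $H'$, so failure of the extension property for $H'$ produces a witness living in a face of $H'$, not in any individual face of $H$. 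The passage from global $w_\eta$-minimality of $H$ to a local, per-face criticality statement requires an additional argument that your sketch does not supply. You are right that the genuinely hard step is the numerical compatibility of the constants $s(5),\ldots,s(8)$, $s(\Pi)$, $\kappa$, and $\eta$; that bookkeeping is precisely the content of the cited lemma in \cite{trfree6} and cannot be discharged at the level of a sketch.
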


Throughout the rest of the paper, let $\eta$ denote the constant of Theorem~\ref{thm-almstruct}.
We need a stronger variant for the disk, see Corollary~5.3 in~\cite{trfree4}.

\begin{theorem}\label{thm-diskcase}
Let $G$ be a triangle-free graph embedded in the disk with boundary cycle $B$.
Then either every $3$-coloring of $B$ extends to a $3$-coloring of $G$, or
there exists a connected subgraph $H\subseteq G$ such that $B\subsetneq H$ and $w_0(H)\le s(|B|-2)$.
\end{theorem}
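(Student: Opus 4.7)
The plan is to construct $H$ as a minimal obstruction and then bound its weight by discharging. Suppose some $3$-coloring $\psi$ of $B$ does not extend to $G$, and choose $H\subseteq G$ containing $B$ that is inclusion-minimal with respect to the property that $\psi$ does not extend to a $3$-coloring of $H$. Then $B\subsetneq H$ (because $\psi$ trivially extends to $B$ itself), $H$ is connected (any component disjoint from $B$ would be $3$-colorable on its own, contradicting minimality), and $H$ is \emph{$B$-critical}: for every $e\in E(H)\setminus E(B)$ the coloring $\psi$ extends to $H-e$.

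To bound $w_0(H)$, I would invoke the density theorem for $B$-critical triangle-free graphs embedded in the disk, namely Corollary~5.4 of~\cite{trfree4}. Its proof proceeds by discharging: assign to each face $f$ its initial charge $w_0(f)$, and redistribute using rules derived from a list of reducible configurations that cannot appear in a $B$-critical triangle-free graph (typically short faces clustered together, or short faces sharing many edges with $B$). After discharging, every face other than the cuff face bounded by $B$ has nonpositive residual charge, so $w_0(H)$ is bounded by what escapes to that boundary face; a careful accounting of this leakage, using that $B$ is a cycle of length $|B|$ on the single cuff, works out to exactly $s(|B|-2)$.

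The main obstacle, if one were to reprove this from scratch, lies in the sharp choice of the fractional weights $4/4113$, $72/4113$, $540/4113$, $2184/4113$ assigned by $s$ to faces of length $5$, $6$, $7$, $8$. These values are not ad hoc: they are essentially the unique solution to the linear system of inequalities arising from the discharging rules for the reducible configurations catalogued in~\cite{trfree4}. Establishing reducibility of each such configuration (by case analysis showing how a $3$-coloring of its boundary propagates, producing either a smaller $B$-critical witness or a direct extension of $\psi$) and then checking that the resulting charge transfers close---with equality attained only in extremal cases that produce the $s(|B|-2)$ constant---is the combinatorial heart of~\cite{trfree4}. For the present paper the cleanest path is to invoke that corollary as a black box, which is what is done here.
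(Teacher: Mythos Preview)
Your proof is correct and follows essentially the same approach as the paper: take a minimal subgraph $H\supseteq B$ to which $\psi$ does not extend, observe that $H$ is connected (the paper phrases this via Gr\"otzsch's theorem, which is exactly the reason your ``component disjoint from $B$ is $3$-colorable'' step works) and $B$-critical, and then invoke Corollary~5.4 of~\cite{trfree4} as a black box to obtain $w_0(H)\le s(|B|-2)$. Your additional paragraphs sketching the discharging machinery behind that corollary are accurate context but go beyond what the paper's proof records.
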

\begin{proof}
If some $3$-coloring $\varphi$ of $B$ does not extend to a $3$-coloring of $G$, then let $H\supsetneq B$
be a minimal subgraph of $G$ such that $\varphi$ does not extend to a $3$-coloring of $H$.
By Gr\"otzsch's theorem and the minimality of $H$, we conclude that $H$ is connected.
Furthermore, it is easy to see that $H$ is $B$-critical (in the sense defined in Section 2 of~\cite{trfree4}),
and by \cite[Corollary~5.3]{trfree4}, we have $w_0(H)\le s(|B|-2)$.
\end{proof}

Consider a graph $G$ embedded in a surface $\Sigma$.
A cycle $K$ in $G$ is \emph{contractible} if there exists a closed disk
$\Delta\subseteq \Sigma$ with boundary equal to $K$.  For a cuff $C$, let
$\Sigma+\widehat{C}$ denote the surface obtained from $\Sigma$ by adding an open
disk disjoint from $\Sigma$ and with boundary equal to $C$; we say that
$\Sigma+\widehat{C}$ is obtained from $\Sigma$ by \emph{patching a cuff}.  A cycle $K$
\emph{surrounds a cuff $C$} if
$K$ is not contractible in $\Sigma$, but it is contractible in $\Sigma+\widehat{C}$.

We use a data structure we designed in~\cite{trfree6}, see Lemma~4.6 (in this paper, we do not
use the operation of contracting the edges of a star forest, and thus we omit the relevant parts of the statement).

\begin{lemma}\label{lemma-dataemb}
For any integer $d\ge 0$ and every surface $\Sigma$, there exists a data structure as follows.
The data structure represents a graph $G$ with a $2$-cell embedding in $\Sigma$ and supports the
following operations in amortized constant time (depending only on $d$ and $\Sigma$):
\begin{enumerate}[(a)]
\item Removal of an edge or an isolated vertex.
\item For any vertex $v\in V(G)$, deciding whether there exists a closed walk $W$ of length at most $d$ with $v\in V(W)$ such that
$W$ is not null-homotopic even after patching any one cuff of $\Sigma$ with a disk, and finding such a walk if that is the case.
\item For any vertex $v\in V(G)$ and any set $D$ of cuffs of $\Sigma$, letting $\Sigma'$ be a surface obtained from $\Sigma$ by
patching all the cuffs in $D$ and letting $\Lambda\subseteq\Sigma'$ be an open disk containing all the patches,
deciding whether there exists a closed walk $W$ in $G$ of length at most $d$ such that $W$ contains $v$ and is homotopically
equivalent (in $\Sigma$) to the boundary of $\Lambda$, and finding such a walk if that is the case.
\end{enumerate}
The data structure can be initialized in $O(|V(G)|)$ time.
\end{lemma}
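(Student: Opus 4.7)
The plan is to precompute, at each vertex $v$, a bounded-size table indexed by the finitely many homotopy classes of closed curves in $\Sigma$, and in each surface $\Sigma+\widehat{C}$, that can be represented by walks of length at most $d$. In each nonempty class I would store a single witnessing walk, together with back-pointers from its edges and vertices into the table entries that reference them. Because $d$, the Euler genus of $\Sigma$, and the number of cuffs are all bounded, the number of relevant homotopy classes is $O(1)$, hence each table has $O(1)$ entries; queries~(b) and~(c) then reduce to scanning the table at $v$ for an entry whose class avoids being killed by any single cuff-patch (for (b)) or matches the specific class $[\partial\Lambda]$ (for (c)), each taking constant time.

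To classify the homotopy type of a short closed walk, I would use the normal representation directly: lifting the walk along $\theta^{-1}$ into the disk $\Delta$ records a cyclic sequence of crossings with the identification arcs $A_1,B_1,\ldots,A_p,B_p$. Since $p=O(g+c)$ and the walk has length at most $d$, this sequence has length at most $d$ over an alphabet of size $O(g+c)$, and a precomputed constant-size lookup table translates it into the corresponding element of $\pi_1(\Sigma)$ and of $\pi_1(\Sigma+\widehat{C})$ for every cuff $C$. Initialization then proceeds from the key observation that a closed walk in $G$ of length at most $d$ is homotopically nontrivial only if it crosses one of the identification arcs; since the edges of $H$ lying on these arcs are incident with only $O(g+c)$ vertices of $G$, I would seed a bounded-depth, homotopy-class-indexed search from each such seed rather than from every vertex. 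For each seed and each of the $O(1)$ target homotopy classes, the search walks through $G$ maintaining the partial homotopy word; once a witnessing walk for a class has been found, the corresponding entries are installed in the tables of all vertices along the walk and back-pointers are set. The number of seed vertices is $O(1)$ and each search is easily shown to run in $O(|V(G)|)$ time (because the state ``current vertex plus current homotopy word'' has only $O(|V(G)|)$ values), so the entire initialization is $O(|V(G)|)$.

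For operation~(a), deletion of an edge $e$ follows the back-pointers stored at $e$ to find every witness walk that used $e$, invalidates those entries, and repeats the local search for the (constantly many) affected homotopy classes starting from neighbors of $e$; a standard charging argument — each witness walk is created and destroyed at most once per homotopy class per region of the graph — gives $O(1)$ amortized time per deletion. The principal obstacle in making this all rigorous is the initialization: with unbounded vertex degrees, one cannot afford a naive BFS of depth $d$ from every vertex, so the argument must genuinely exploit that nontrivial short walks are ``captured'' by the constantly many identification arcs of the normal representation and that for each seed and each homotopy class the search only needs to produce one witness, not enumerate all walks. Setting this up so that both the witness-search and the back-pointer maintenance respect the amortized $O(1)$ bound under arbitrary sequences of deletions is the technical heart of the proof.
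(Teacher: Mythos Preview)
The paper does not prove this lemma: it is quoted verbatim from the previous paper of the series (\cite{trfree6}, Lemma~4.6), so there is no ``paper's own proof'' here to compare against beyond that citation. Your sketch is therefore an attempt at an independent proof, and it contains a genuine gap.

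The critical error is the sentence ``since the edges of $H$ lying on these arcs are incident with only $O(g+c)$ vertices of $G$, I would seed a bounded-depth, homotopy-class-indexed search from each such seed rather than from every vertex.'' This is false. In the normal representation, the arcs $A_i,B_i$ cover the non-boundary part of the cut graph $H$, and while $H$ has only $O(g+c)$ \emph{branch} vertices, the cut paths themselves can pass through $\Theta(|V(G)|)$ vertices of $G$. (For a concrete obstruction, take $\Sigma$ a cylinder and $G$ a stack of $n$ nested non-contractible cycles joined by spokes: any cut path between the two cuffs must thread through all $n$ layers, hitting $\Theta(n)$ vertices of $G$.) Consequently your seeding strategy does not give $O(1)$ seeds, and the initialization argument collapses. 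The fallback you mention --- that the state space ``current vertex plus current homotopy word'' has $O(|V(G)|)$ elements --- bounds a single-source search, but you need the answer at \emph{every} vertex $v$ (a closed walk through $v$), and nothing in the sketch explains how to avoid the $O(|V(G)|^2)$ cost of an all-sources search.

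A second, looser gap is in the deletion analysis. Saying ``each witness walk is created and destroyed at most once per homotopy class per region'' is not a charging argument; when a witness through $v$ is destroyed you must re-search for a replacement, and that search is not constant-time unless you have already cached alternative witnesses --- but then you must explain why the total number of witnesses ever stored is $O(|V(G)|)$. The actual construction in \cite{trfree6} handles both issues with a rather different mechanism; if you want to reconstruct a proof, you will need either to consult that paper or to replace the ``$O(1)$ seeds'' idea with something that genuinely controls the all-vertices initialization, for instance by exploiting a bounded-width layering of $G$ relative to the cut graph rather than a bounded number of seed vertices.
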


In~\cite{trfree6}, we designed several useful algorithms.  One of them can be used to eliminate contractible $(\le\!4)$-cycles.

\begin{lemma}[\cite{trfree6}, Lemma~4.9]\label{lemma-elim24}
For any surface $\Sigma$, there exists a linear-time algorithm that,
given a graph $G$ with a $2$-cell embedding in $\Sigma$ such that every cuff of $\Sigma$ traces a cycle in $G$,
returns a subgraph $H$ of $G$ such that
\begin{itemize}
\item $H$ (with its drawing inherited from $G$) is $2$-cell embedded in $\Sigma$ and all boundary cycles of $G$ belong to $H$,
\item all contractible cycles in $H$ of length at most $4$ bound $2$-cell faces, and
\item all vertices and edges of $G$ that do not belong to $H$ are drawn in $2$-cell $(\le\!4)$-faces of $H$.
\end{itemize}
\end{lemma}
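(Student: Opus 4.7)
My plan is to build $H$ by an iterative pruning of $G$. Starting from $H:=G$, I would repeatedly look for a contractible closed walk $W$ of length at most $4$ in $H$ whose bounded disk $\Delta\subseteq\Sigma$ contains a vertex or edge of $H$ in its interior, and delete every such interior vertex and edge of $H$. The cycle $C\subseteq W$ bounding $\Delta$ survives the deletion, so the interior of $\Delta$ becomes a single open-disk face of $H$ of length $|C|\le 4$, and every deleted piece lies inside this new 2-cell $(\le 4)$-face, giving the third conclusion. Outside $\Delta$ the embedding is untouched, so $H$ stays 2-cell embedded in $\Sigma$; and since a contractible cycle bounds a closed disk disjoint from every cuff, the boundary cycles of $G$ are never affected, giving the first conclusion. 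When the loop halts, every contractible $(\le 4)$-cycle of $H$ has an empty disk interior, hence bounds a 2-cell face, giving the second conclusion.

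For the implementation, I would initialize the data structure of Lemma~\ref{lemma-dataemb} with $d=4$ on the normal representation of $G$ in $O(|V(G)|)$ time. Then I would scan each vertex $v\in V(G)$ and, at $v$, repeatedly invoke operation (c) with $D=\emptyset$ and any small open disk $\Lambda\subseteq\Sigma$; since $\partial\Lambda$ is null-homotopic in $\Sigma$, this returns, whenever one exists, a null-homotopic closed walk of length at most $4$ through $v$. From such a walk I would read off its outer contractible cycle $C$ by a short case analysis on the shape of $W$ in a simple embedded graph, and then, using the rotations at the vertices of $C$, decide in constant time whether the interior of $\Delta$ contains any vertex or edge of $H$. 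If it does, I would delete the interior material via operation (a) and re-query at $v$; otherwise I would advance to the next vertex.

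For the bounds, $G$ has $O(|V(G)|)$ edges on the fixed surface $\Sigma$, so the total number of deletions is $O(|V(G)|)$, each of amortized constant cost by Lemma~\ref{lemma-dataemb}, and the productive operation-(c) queries are likewise bounded by the number of deletions. For correctness, if some contractible $(\le 4)$-cycle $C'$ with non-empty interior survived in the final $H$, then at the moment any $v\in V(C')$ was being processed, the walk tracing $C'$ would still have been a valid answer to the query at $v$, so the procedure could not have moved past $v$ without performing the corresponding deletion—a contradiction.

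The delicate point, and in my view the main obstacle, is the gap between what operation (c) returns---\emph{some} null-homotopic walk through $v$, not necessarily one with non-empty interior---and what the algorithm needs---a walk with non-empty interior if any exists. If the data structure persistently hands back walks around already-empty disks at $v$, the algorithm could advance past $v$ prematurely. I would resolve this by falling back, whenever a harmless walk is encountered, to a direct enumeration of the constantly many short closed walks through $v$ using the local rotation system; the amortized cost of these auxiliary enumerations is charged against the 2-cell $(\le 4)$-faces incident with $v$, whose total count across $G$ is still $O(|V(G)|)$, so the linear-time bound is preserved.
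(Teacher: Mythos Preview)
The paper does not prove this lemma here; it is quoted from \cite{trfree6} without argument, so there is no in-paper proof to compare against. I will therefore assess your proposal on its own.

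Your use of operation~(c) of Lemma~\ref{lemma-dataemb} with $D=\emptyset$ does not do what you need. With $D=\emptyset$ the boundary of $\Lambda$ is null-homotopic in $\Sigma$, and operation~(c) only promises to return \emph{some} null-homotopic closed walk of length at most $4$ through $v$. Such walks are ubiquitous: traversing any incident edge and returning is already a null-homotopic closed walk of length~$2$. So the query is essentially always satisfied by a degenerate or face-bounding walk and gives you no handle on the contractible $(\le\!4)$-cycles with non-empty interior that you actually want. You anticipate this and fall back to a ``direct enumeration of the constantly many short closed walks through $v$,'' but that count is not a constant: when $\deg(v)$ is unbounded, the number of $4$-cycles through $v$ can be of order $\deg(v)^2$ (consider $K_{2,n}$). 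Your proposed charging to the $(\le\!4)$-faces incident with $v$ does not cover these, since most such $4$-cycles do not bound faces. Hence neither the linear-time bound nor the exhaustiveness claim in your correctness argument (``the walk tracing $C'$ would still have been a valid answer to the query at $v$'') is substantiated: operation~(c) may never hand you $C'$, and the fallback cannot affordably list it.

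If you want to repair the approach, note how Lemma~\ref{lemma-testfree1} in this paper uses the data structure: it drills a hole inside a face and asks for a short walk surrounding that hole. That query is genuinely constrained and cannot be satisfied by degenerate back-and-forth walks, so it actually detects short contractible cycles whose interior contains the chosen face.
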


We will need this result in combination with the algorithm of Dvo\v{r}\'ak, Kawarabayashi and Thomas~\cite{DvoKawTho}.

\begin{theorem}\label{thm-ext45}
There exists a linear-time algorithm as follows.  Let $G$ be a plane triangle-free graph with the outer face bounded by
a cycle $C$ of length at most $5$.  Given a $3$-coloring $\psi$ of $C$, the algorithm returns a $3$-coloring of $G$ that
extends $\psi$.
\end{theorem}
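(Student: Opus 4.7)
The plan is to derive the theorem directly from the algorithm of Dvo\v{r}\'ak, Kawarabayashi, and Thomas, since the statement is essentially a repackaging of their main result. Two things must be checked: first, that a $3$-coloring of $G$ extending $\psi$ always exists, so that the algorithm can unconditionally output a coloring; second, that such a coloring can be computed in linear time.

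For existence I would invoke Thomassen's precoloring-extension strengthening of Gr\"otzsch's theorem: every $3$-coloring of a face of length at most $5$ in a triangle-free plane graph extends to a $3$-coloring of the whole graph. This makes it legitimate for the algorithm to always return a valid extension of $\psi$ instead of having a failure branch.

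For the algorithmic half, the most direct route is to consult the DKT paper, whose main theorem is already phrased as a linear-time precoloring-extension algorithm for a face of length at most $5$; in that case Theorem~\ref{thm-ext45} is a verbatim restatement. If only the uncolored Gr\"otzsch form of the DKT algorithm were available, I would instead reduce to it by attaching, in the outer face of $G$, a constant-size triangle-free planar gadget on the vertices of $C$ whose every $3$-coloring restricts to $\psi$ on $V(C)$ (such gadgets are easy to build from $4$- and $5$-faces when $|C|\le 5$), running the DKT algorithm on the modified graph, and then deleting the gadget. This preserves triangle-freeness, planarity, and linear size, and by Thomassen's theorem the modified graph is still $3$-colorable.

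The main obstacle I anticipate is purely on the interface side: either ensuring that the DKT procedure as published really accepts a precolored outer face of length up to $5$ as input, or, if a gadget reduction is needed, designing it uniformly for $|C|\in\{3,4,5\}$ so that no new triangle is created and the modified outer face remains of bounded length. Since the present paper simply cites the DKT paper as Theorem~\ref{thm-ext45}, I expect the former is the intended resolution and no further work is required beyond the citation.
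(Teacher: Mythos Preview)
Your proposal is correct and matches the paper's approach: Theorem~\ref{thm-ext45} is stated without proof and simply attributed to Dvo\v{r}\'ak, Kawarabayashi, and Thomas~\cite{DvoKawTho}, so a citation is all that is required. One trivial slip: since $G$ is triangle-free, the case $|C|=3$ in your gadget discussion does not arise.
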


In particular, in the situation of Lemma~\ref{lemma-elim24}, any $3$-coloring of $H$ can be extended to a $3$-coloring of $G$
in time $O(|V(G)\setminus V(H)|)$.

Consider a graph $G$ embedded in a surface $\Sigma$.
A subgraph $H$ of $G$ is \emph{non-essential}
if there exists $\Lambda\subset \Sigma$ containing $H$, where $\Lambda$ is either an open disk, or an open disk with a hole whose boundary is equal to a cuff of $\Sigma$.
A subgraph $H$ of $G$ is \emph{essential} if it is not non-essential.
We say that a surface $\Sigma'$ is \emph{at most as complex as $\Sigma$} if $\Sigma'$ has smaller genus than $\Sigma$, or
$\Sigma'$ has the same genus and fewer cuffs than $\Sigma$, or $\Sigma'$ is homeomorphic to $\Sigma$.
For a graph $G$ embedded in $\Sigma$ so that every cuff of $\Sigma$ traces a cycle in $G$, let $b(G)$ denote the multiset of the lengths of the boundary cycles of $G$.
For two multisets $S$, $T$ of integers such that $|S|=|T|=m$, we say that \emph{$S$ dominates $T$} if
there exists an ordering $s_1,\ldots, s_m$ of the elements of $S$ and an ordering $t_1,\ldots,t_m$ of the elements of $T$
such that $s_i\ge t_i$ for $i=1,\ldots, m$. 

The following algorithm is useful when dealing with essential subgraphs; let us recall that the notations
$\Sigma_h$ and $G_h$ were defined at the beginning of this section.

\begin{lemma}[\cite{trfree6}, Lemma~4.7]\label{lemma-qstruc}
For any function $\nu(\Pi,n)$, any surface $\Sigma$ and any integer $k\ge 0$, there exists a constant $\sigma$ and a linear-time algorithm as follows.
Let $G$ be a graph $2$-cell embedded in $\Sigma$ with boundary cycles $B_1$, \ldots, $B_c$ of total length at most $k$.
The algorithm returns a subgraph $H$ of $G$ with at most $\sigma$ vertices such that $B_1\cup \ldots\cup B_c\subseteq H$
and for each face $h$ of $H$, $G_h$ (in its embedding in $\Sigma_h$) does not contain any
connected essential subgraph with fewer than $\nu(\Sigma_h,k_h)$ edges, where $k_h$ is the sum of the lengths
of the boundary cycles of $G_h$.
Furthermore, $\Sigma_h$ is at most as complex as $\Sigma$, and if $\Sigma_h$ is homeomorphic to $\Sigma$, then
$b(G)$ dominates $b(G_h)$.
\end{lemma}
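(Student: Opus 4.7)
The plan is to iteratively grow $H$ from the boundary cycles by absorbing short essential subgraphs until no face contains one. Initialize $H := B_1\cup\ldots\cup B_c$. Then, as long as some face $h$ of $H$ contains a connected essential subgraph of $G_h$ with fewer than $\nu(\Sigma_h, k_h)$ edges, pick such a subgraph $F$, add it to $H$ together with a shortest $F$-to-boundary path in $G_h$, and repeat. When no such face remains, return $H$. Correctness is immediate from the termination condition.

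For termination, the key observation is that if $F$ is essential in $\Sigma_h$, then cutting $h$ along $F$ yields sub-faces whose surfaces are strictly less complex than $\Sigma_h$ in the sense preceding Lemma~\ref{lemma-qstruc}: essentiality forbids $F$ from lying in a single open disk or in a disk-with-one-cuff-hole, so either some handle, crosscap, or cuff is removed, or the face is split nontrivially. Consequently the potential $\Phi(H) := \sum_h (2 g(\Sigma_h) + c(\Sigma_h))$ strictly decreases at each step and is initially at most $2g(\Sigma) + c$, so at most $O(g(\Sigma) + c)$ iterations occur. A face with $\Sigma_h$ homeomorphic to $\Sigma$ can arise only if no absorption took place inside it; together with the initial equality $b(G_h) = b(G)$ this gives the last clause of the lemma.

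For linear-time implementation I would use Lemma~\ref{lemma-dataemb}. A connected essential subgraph of $G_h$ with $e$ edges necessarily contains either (i) a short closed walk of length at most $2e$ that is not null-homotopic in $\Sigma_h$ even after patching any single cuff, detectable by operation (b), or (ii) a short closed walk homotopic to the boundary of a disk containing the patches of a specified set of at least two cuffs of $\Sigma_h$, detectable by operation (c); conversely, any walk returned by such queries yields a short essential subgraph. Choose the data-structure parameter $d$ as a uniform upper bound on $\nu(\Sigma', k')$ over all pairs $(\Sigma', k')$ arising in the recursion. Since each absorption adds at most $2d$ edges to $H$ and the number of iterations is at most $O(g+c)$, the sum of boundary lengths remains bounded by $k + O(d(g+c))$, and there are finitely many surfaces at most as complex as $\Sigma$, so a self-consistent finite $d$ exists. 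For each face we query, we (re)initialize the data structure on $G_h$ in $O(|V(G_h)|)$ time; since the total number of iterations is constant, the whole algorithm runs in $O(|V(G)|)$ time, and $|V(H)| \le \sigma$ for a constant $\sigma$ depending on $\Sigma$, $k$, and $\nu$.

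The main obstacle is the self-referential choice of $d$ combined with verifying the ``short essential subgraph $\Leftrightarrow$ short closed walk of a homotopy type supported by Lemma~\ref{lemma-dataemb}'' equivalence in full generality, especially separating the single-cuff-surrounding case from configurations requiring two cuffs joined by a path. Both pieces should yield to a direct topological case analysis, since a connected graph that is essential must witness its essentiality through a bounded-length cycle, or a pair of bounded-length cycles linked by a bounded-length path, each of which translates into one of the homotopy predicates supported by the data structure.
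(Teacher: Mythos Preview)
The paper does not prove this lemma; it is imported verbatim from \cite{trfree6} (Lemma~4.7 there), so there is no proof here to compare against.  Your overall strategy---start with $H=B_1\cup\cdots\cup B_c$ and repeatedly absorb a short essential subgraph from a face until none remain---is the natural one and almost certainly the one used in \cite{trfree6}.

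There is, however, a genuine error in your termination argument.  The potential $\Phi(H)=\sum_h\bigl(2g(\Sigma_h)+c(\Sigma_h)\bigr)$ need not decrease.  Take $\Sigma_h$ to be a sphere with four cuffs $C_1,C_2,C_3,C_4$, and let $F$ be a cycle separating $\{C_1,C_2\}$ from $\{C_3,C_4\}$; such a cycle is essential (it lies in no disk and in no annulus around a single cuff).  Absorbing $F$ together with a path to $C_1$ replaces $h$ (contribution $4$ to $\Phi$) by a three-holed sphere (contribution $3$) and a cylinder (contribution $2$), so $\Phi$ rises from $4$ to $5$.  Thus your sentence ``Consequently the potential \ldots\ strictly decreases'' is a non sequitur: the premise that each new face is strictly less complex than $\Sigma_h$ is correct, but a sum over faces of a quantity that is monotone in complexity can still increase when one face is replaced by several.

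What does follow from the premise is termination via the multiset extension of the complexity order: at every step one element of the multiset $\{\Sigma_h:h\text{ face of }H\}$ is replaced by finitely many strictly smaller ones, and this order is well-founded.  That gives a bound on the number of iterations that depends only on $\Sigma$ and on the branching factor (hence on $\nu$), which suffices for the lemma, though the bound is not obviously $O(g+c)$.  If you want an explicit scalar potential, the identity $\chi(\Sigma)=|V(H)|-|E(H)|+\sum_h\chi(\Sigma_h)$ is the right tool: it ties $\sum_h(-\chi(\Sigma_h))$ to $|E(H)|-|V(H)|$, and each absorption increases $|E(H)|-|V(H)|$ by the first Betti number of the piece added (at least~$1$, since an essential subgraph contains a cycle).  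One still has to combine this with a bound on the number of disk faces created to get a clean iteration count.

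Your two acknowledged obstacles are real but routine.  The circularity in choosing $d$ is resolved by the well-foundedness above: the recursion depth is bounded by a function of $\Sigma$ alone, so one can compute in advance the finite set of pairs $(\Sigma',k')$ that can arise and take $d$ to be the maximum of $\nu$ over that set.  The equivalence between ``short connected essential subgraph'' and the closed-walk predicates of Lemma~\ref{lemma-dataemb} is, as you say, a topological case analysis: a connected essential subgraph on $e$ edges contains a closed walk of length at most $2e$ that either fails to become null-homotopic after patching any single cuff, or surrounds at least two cuffs, and these are exactly the cases detected by operations (b) and (c).
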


We also need a similar algorithm to deal with the cylinder case, to obtain a maximal ``laminar'' set of
short non-contractible cycles.

\begin{lemma}[\cite{trfree6}, Lemma~4.3]\label{lemma-splitcyl}
Let $d$ be a positive integer.  There exists a linear-time algorithm that,
given a graph $G$ that is $2$-cell embedded in the cylinder $\Sigma$ with boundary cycles $B_1$ and $B_2$
of length at most $d$, returns a sequence $C_0$, $C_1$, \ldots, $C_m$ of non-contractible cycles of $G$ of length at most $d$
such that
\begin{itemize}
\item $C_0=B_1$ and $C_m=B_2$,
\item for $0\le i< m$, the cycle $C_i$ is contained in the part of $\Sigma$ between $B_1$ and $C_{i+1}$, and
\item either $C_i$ intersects $C_{i+1}$, or the subcylinder of $\Sigma$ between $C_i$ and $C_{i+1}$ contains no
non-contractible cycle of length at most $d$ distinct from $C_i$ and $C_{i+1}$.
\end{itemize}
\end{lemma}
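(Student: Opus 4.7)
The plan is to build the sequence $C_0,C_1,\ldots,C_m$ greedily in a sweep from $B_1$ toward $B_2$, using the data structure of Lemma~\ref{lemma-dataemb} to locate short non-contractible cycles efficiently. Initialize this data structure on $G$ in the cylinder $\Sigma$ with the parameter $d$; since every non-contractible cycle in the cylinder is homotopic to $B_1$, short non-contractible cycles through a given vertex can be retrieved in amortized constant time via Lemma~\ref{lemma-dataemb}(c) with $D=\{B_1\}$. Set $C_0:=B_1$.

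At a generic step, given $C_0,\ldots,C_i$ (with all edges and vertices strictly below $C_i$ already deleted from the data structure), construct $C_{i+1}$ as the ``lowest'' valid short non-contractible cycle above $C_i$, as follows. Initialize a candidate $C^{\ast}:=B_2$ and iteratively refine it: scan the vertices $u$ lying in the open subcylinder between $C_i$ and $C^{\ast}$, and for each such $u$ query the data structure. If the query returns a short non-contractible cycle $\hat C$ through $u$ that is disjoint from $V(C_i)\cup V(C^{\ast})$, then by the topological fact that two disjoint non-contractible cycles in the cylinder bound a subcylinder between them, $\hat C$ lies strictly between $C_i$ and $C^{\ast}$; update $C^{\ast}:=\hat C$ and restart the scan on the now-smaller interior. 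If the scan terminates without any refinement, set $C_{i+1}:=C^{\ast}$, delete from the data structure all vertices and edges strictly between $C_i$ and $C_{i+1}$ via Lemma~\ref{lemma-dataemb}(a), and proceed. The algorithm halts with $C_m:=B_2$ once no refinement is possible starting from $C^{\ast}=B_2$.

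For the linear-time bound, the amortization is that each vertex or edge of $G$ is scanned only while it is strictly between $C_i$ and the current $C^{\ast}$; once $C_{i+1}$ is committed, everything strictly below $C_{i+1}$ is discarded and never seen again. Combined with the amortized $O(1)$ cost per data-structure operation, the total work across all iterations is $O(|V(G)|+|E(G)|)$. Correctness of the monotonicity bullet is immediate from the construction, and the ``no cycle in between'' bullet is ensured by the exhaustive refinement: if some short non-contractible cycle strictly between $C_i$ and $C_{i+1}$ existed, any of its vertices in the open slab would have triggered a further refinement of $C^{\ast}$.

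The main obstacle is the careful handling of queries that return cycles $\hat C$ intersecting $V(C_i)$ or $V(C^{\ast})$. In the first subcase we may set $C_{i+1}:=\hat C$ immediately, invoking the first alternative of the third bullet; in the second we must argue (using that the two cycles share a vertex in a $2$-cell embedding) that either $\hat C\cup C^{\ast}$ contains a strictly lower short non-contractible cycle to take as the new candidate, or $\hat C$ itself already meets $C_i$, reducing to the previous case. Organizing these case distinctions so that each refinement either shrinks the slab or immediately commits $C_{i+1}$, while preserving the amortized linear time bound, is the main technical hurdle of the proof.
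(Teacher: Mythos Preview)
The paper does not prove this lemma; it is quoted verbatim from~\cite{trfree6} (their Lemma~4.3), so there is no in-paper argument to compare against. Your overall plan---sweep from $B_1$ toward $B_2$, using the data structure of Lemma~\ref{lemma-dataemb}(c) with $D=\{B_1\}$ to detect short non-contractible closed walks through a given vertex---is the natural approach and is consistent with how the paper uses that data structure elsewhere.

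That said, there is a real gap in your running-time argument. You write that after each refinement you ``restart the scan on the now-smaller interior,'' and then justify linearity by saying that a vertex is ``scanned only while it is strictly between $C_i$ and the current $C^\ast$.'' The second statement is true but does not bound the \emph{number} of scans: a vertex lying just above $C_i$ stays in the slab through every refinement of $C^\ast$, and with restarts is requeried each time. If the slab contains a nested family of $\Theta(n)$ short non-contractible cycles and you discover them from the top down, step $i$ alone costs $\Theta(n^2)$ queries. The repair is simply not to restart: the query at $u$ depends only on the graph currently held in the data structure (which you do not modify during refinement), so its answer is unaffected by changes to $C^\ast$. Process each vertex of the slab at most once; when the query at $u$ returns a cycle $\hat C$ disjoint from $C_i$, the vertex $u$ lies on $\hat C$, hence on or above the updated $C^\ast$, so it legitimately leaves the slab and need not be revisited. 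With this change the amortization you intend goes through.

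Two smaller points. First, Lemma~\ref{lemma-dataemb}(c) returns a closed \emph{walk}, not a cycle, so you must extract from it a non-contractible cycle of length at most $d$ (in the cylinder a closed walk of winding number~$1$ contains such a cycle, but this step should be stated). Second, you correctly identify the case where $\hat C$ meets $C_i$ or $C^\ast$ as the delicate one and leave it essentially open; since this is exactly where the argument would break down, the proposal in its current form is a plan rather than a proof.
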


\section{Freedom}\label{sec-freedom}

Let $G$ be a triangle-free graph with a $2$-cell embedding in a surface $\Sigma$.  Let $S$ be a set of faces of $G$ and let $W$ be a contractible
closed walk in $G$ forming the boundary of an open disk $\Lambda\subset\Sigma$.  We say that $W$ \emph{binds} $S$ (with respect to $\Lambda$) if
$\Lambda\not\in S$ and
$$\sum_{f\in S, f\subseteq \Lambda} w_0(f)\ge s(|W|).$$  We say that $S$ is
\emph{$k$-free} if no closed walk of length at most $k$ binds $S$.  The key observation is that the presence of a large free set
ensures the possibility to extend a precoloring.

\begin{lemma}\label{lemma-disk}
Let $G$ be a triangle-free graph with a $2$-cell embedding in the disk $\Delta$, with boundary cycle $B$.  If
$G$ contains a $(|B|-2)$-free set $S$ of faces such that $\sum_{f\in S} w_0(f)>s(|B|-2)$,
then every $3$-coloring of $B$ extends to a $3$-coloring~of~$G$.
\end{lemma}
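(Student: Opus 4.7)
The plan is to argue by contradiction using Theorem~\ref{thm-diskcase}. Suppose some $3$-coloring $\varphi$ of $B$ does not extend to $G$. When $|B|\le 5$, Theorem~\ref{thm-ext45} already provides such an extension, so I may restrict attention to $|B|\ge 6$. In that range, Theorem~\ref{thm-diskcase} yields a connected subgraph $H\subseteq G$ with $B\subsetneq H$ and $w_0(H)\le s(|B|-2)$. Since $H$ is connected, contains $B$, and is embedded in the closed disk $\Delta$, each face $h$ of $H$ is an open disk bounded by a closed walk $W_h$ in $H$, with $w_0(h)=s(|W_h|)$.

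The key intermediate step is to show $|W_h|\le |B|-2$ for every face $h$ of $H$. From $\sum_h s(|W_h|)=w_0(H)\le s(|B|-2)$ and the non-negativity of the summands, one gets $s(|W_h|)\le s(|B|-2)$; using that $s$ vanishes on $\{n\le 4\}$ and is strictly increasing on $\{n\ge 4\}$, together with $|B|-2\ge 4$, this forces $|W_h|\le |B|-2$. This is the one delicate point of the argument: without this length bound, $(|B|-2)$-freeness would not be available for every face of $H$, and the final telescoping below would break.

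Once the bound is in place, $(|B|-2)$-freeness is applied face by face. For each face $h$ of $H$, set $S_h=\{f\in S:f\subseteq h\}$, and note that every face of $G$ lies in a unique face of $H$. If $h\in S$, then $h$ is itself a face of $G$, $S_h=\{h\}$, and $\sum_{f\in S_h}w_0(f)=w_0(h)$. If $h\notin S$, the closed walk $W_h$ has length at most $|B|-2$ and bounds the disk $h\notin S$, so by freeness $W_h$ does not bind $S$, yielding $\sum_{f\in S_h}w_0(f)<s(|W_h|)=w_0(h)$. Summing,
$$\sum_{f\in S}w_0(f)=\sum_h\sum_{f\in S_h}w_0(f)\le\sum_h w_0(h)=w_0(H)\le s(|B|-2),$$
contradicting the hypothesis $\sum_{f\in S}w_0(f)>s(|B|-2)$ and completing the proof.
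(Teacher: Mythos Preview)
Your proof is correct and follows essentially the same approach as the paper's: argue by contradiction via Theorem~\ref{thm-diskcase}, bound the length of each face of $H$ by $|B|-2$, apply $(|B|-2)$-freeness face by face, and sum to reach a contradiction. You are simply more explicit than the paper in two places---handling $|B|\le 5$ directly via Theorem~\ref{thm-ext45}, and justifying the bound $|W_h|\le |B|-2$ from the monotonicity of $s$---where the paper just asserts the bound without comment.
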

\begin{proof}
Suppose for a contradiction that some $3$-coloring of $B$ does not extend to a $3$-coloring of $G$.
By Theorem~\ref{thm-diskcase}, there exists a connected subgraph $H\subseteq G$ with $B\subsetneq H$ such that $w_0(H)\le s(|B|-2)$.
Note that $|h|\le |B|-2$ for every face $h$ of $H$.  Since $S$ is $(|B|-2)$-free,
we have $$\sum_{f\in S, f\subseteq h} w_0(f)\le w_0(h),$$ where we only write the non-strict inequality since $h$ may belong to $S$.
Therefore, $$\sum_{f\in S} w_0(f)\le \sum_{\text{$h$ face of $H$}} w_0(h)=w_0(H)\le s(|B|-2),$$
which is a contradiction.
\end{proof}

We will need the following consequence.  If $G$ is a graph embedded in $\Sigma$ and $\Lambda$ is an open disk whose boundary is
contained in $G$, then let $G-\Lambda$ be the graph obtained from $G$ by removing all vertices and edges drawn in $\Lambda$.

\begin{corollary}\label{cor-surfdisk}
Let $G$ be a triangle-free graph with a $2$-cell embedding in a surface $\Sigma$, let $W$ be a closed walk in $G$ bounding
an open disk $\Lambda\subset \Sigma$, and let $S$ be a set of faces such that
$\sum_{f\in S,f\subseteq\Lambda} w_0(f)>s(|W|-2)$.
If $S$ is $(|W|-2)$-free, then every $3$-coloring of $G-\Lambda$ extends to a $3$-coloring of $G$.
\end{corollary}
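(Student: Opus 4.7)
The plan is to reduce the statement to Lemma~\ref{lemma-disk} by cutting $\Sigma$ along the walk $W$. Since $W$ is a closed walk bounding the open disk $\Lambda$, standard surface topology produces a closed disk $\Delta$ together with a continuous surjection $\theta\colon\Delta\to\overline{\Lambda}$ that restricts to a homeomorphism from the interior of $\Delta$ onto $\Lambda$ and sends $\partial\Delta$ onto $W$ in such a way that the preimage of $W$ is a cycle $B$ of length $|W|$ on $\partial\Delta$. I let $G'=\theta^{-1}(G\cap\overline{\Lambda})$; this is a triangle-free graph $2$-cell embedded in $\Delta$ with boundary cycle $B$, since any triangle in $G'$ would push forward to a closed walk of length $3$ in the simple triangle-free graph $G$.

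Next I would transfer the hypotheses across $\theta$. The given $3$-coloring of $G-\Lambda$ restricts to a $3$-coloring of the vertices of $W$, which pulls back to a well-defined $3$-coloring $\psi$ of $B$ (two vertices of $B$ identified by $\theta$ arise from a common vertex of $W$ and so receive the same color). Let $S'$ be the pull-back of $\{f\in S : f\subseteq\Lambda\}$, viewed as a set of faces of $G'$ inside $\Delta$. Since $\theta$ is a homeomorphism on the interior of $\Delta$, each such face has the same length and hence the same $w_0$-weight as its image, so
\[
\sum_{f'\in S'}w_0(f')=\sum_{f\in S,\,f\subseteq\Lambda}w_0(f)>s(|W|-2)=s(|B|-2).
\]

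The main verification is that $S'$ is $(|B|-2)$-free in $G'$. Suppose for a contradiction that some closed walk $W'$ in $G'$ of length at most $|B|-2$ binds $S'$ with respect to an open disk $\Lambda'\subset\Delta$. Then $\theta(W')$ is a closed walk in $G$ of length at most $|W|-2$, and $\theta(\Lambda')$ is an open disk in $\Sigma$ (because $\theta$ is a homeomorphism on the interior of $\Delta$ and $\Lambda'$ is an open subset of $\Delta$) whose boundary is $\theta(W')$. Faces of $G'$ contained in $\Lambda'$ correspond bijectively under $\theta$ to faces of $G$ contained in $\theta(\Lambda')\subset\Lambda$, with matching $w_0$-weights, so $\theta(W')$ would bind $S$ in $G$, contradicting the $(|W|-2)$-freeness of $S$.

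Applying Lemma~\ref{lemma-disk} to $G'$, $B$, $\psi$, and $S'$ yields a $3$-coloring of $G'$ extending $\psi$. Pushing it forward along $\theta$ (consistent on identified boundary vertices because $\psi$ was) produces a $3$-coloring of $G\cap\overline{\Lambda}$ that agrees with the precoloring on $W$, and together with the precoloring of $G-\Lambda$ this assembles into the desired extension to $G$. The main obstacle will be the topological bookkeeping when $W$ is not a simple cycle: one must construct $\Delta$ and $\theta$ so that boundary identifications match $W$, check that $\theta(W')$ is still a closed walk and $\theta(\Lambda')$ an open disk even when $W'$ meets $\partial\Delta$, and confirm that the face/weight correspondence survives the cut.
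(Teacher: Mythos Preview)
Your proposal is correct and is exactly the argument the paper has in mind: the corollary is stated without proof as an immediate consequence of Lemma~\ref{lemma-disk}, and the reduction you describe (cutting $\Sigma$ along $W$, pulling back $G$, $S$, and the precoloring to the disk $\Delta$, then applying Lemma~\ref{lemma-disk}) is the intended route. The topological bookkeeping you flag at the end is routine and does not hide any real difficulty, since $\Lambda'$ lies in the interior of $\Delta$ where $\theta$ is a homeomorphism, and $\theta(W')$ is a closed walk of the same length as $W'$ even when $W'$ uses boundary edges.
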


Let us design an algorithm enabling us to perform this reduction efficiently.
We need an observation enabling us to simplify closed walks of a given homotopy.
By \emph{drilling a hole} in a face $f$ of an embedded graph, we mean deleting the interior of an arbitrary closed disk contained
in $f$ (so we obtain a new cuff disjoint from the boundary of $f$).

\begin{lemma}\label{lemma-simpc}
Let $G$ be a graph with a $2$-cell embedding in a surface $\Sigma$ other than the sphere, let $S$ be a non-empty set of faces of $G$
and let $\Sigma'$ be the surface obtained from $\Sigma$ by drilling holes in the faces of $S$.
Let $\Lambda\subseteq \Sigma$ be an open disk such that $\Sigma\setminus\Sigma'\subseteq \Lambda$.
Let $W$ be a closed walk in $G$ homotopically equivalent in $\Sigma'$ to the boundary of $\Lambda$,
and let $H_W$ be the subgraph of $G$ consisting of the vertices and the edges of $W$.
Consider the drawing of $H_W$ in $\Sigma$ inherited from $G$, and let $T$ be the set of faces of $H_W$
that intersect $\Sigma\setminus\Sigma'$.

If $G$ (in $\Sigma$) contains no connected essential subgraph with at most $|W|$ edges, then each face of $T$ is $2$-cell and $\sum_{f\in T} |f|\le|W|$.
\end{lemma}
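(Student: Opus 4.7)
The plan is to exploit the no-small-essential-subgraph hypothesis to confine $H_W$ and all drilled holes to a single disk in $\Sigma$, and then perform a planar winding-number count. Since $H_W$ is connected (it is the image of the closed walk $W$) and has at most $|W|$ edges, the hypothesis forces $H_W$ to be non-essential, so there is a region $D\subseteq\Sigma$ containing $H_W$ which is either an open disk or an open disk with a single cuff of $\Sigma$ as its inner boundary.

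Next I would compare homology classes in $\Sigma'$. Since $W\subseteq D$, the class $[W]\in H_1(\Sigma')$ is supported on loops around drilled holes lying in $D$ (plus, in the cuff case, the loop around the cuff). By hypothesis $[W]=[\partial\Lambda]$ in $H_1(\Sigma')$, and since $\Sigma\setminus\Sigma'\subseteq\Lambda$ the class $[\partial\Lambda]$ equals the sum of loops around all drilled holes, with zero coefficient on every cuff of $\Sigma$ and on every handle/crosscap cycle. Matching coefficients forces every drilled hole to lie in $D$, forces $W$ to have winding exactly $1$ around each such hole, and in the cuff case forces winding $0$ around the cuff---the latter makes $W$ null-homotopic in $D$, so $D$ may be treated as a disk from here on.

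Inside $D$ each face $f$ of $H_W$ has a well-defined planar winding number $n_f\in\mathbb{Z}$, with $n_f=0$ on the unbounded face. If $f\in T$ then $f$ contains a drilled hole $h\subseteq D$, giving $n_f=1$; in particular $f$ is a bounded planar face and is therefore $2$-cell as a face of $H_W$ in $\Sigma$, which settles the first conclusion. For the inequality I would use the identity $m(e)\equiv n_{f_L}-n_{f_R}\pmod 2$ relating the multiplicity $m(e)$ of an edge $e$ in $W$ to the windings of its two incident faces $f_L,f_R$. If both $f_L,f_R\in T$ then parity forces $m(e)$ to be even, and $m(e)\ge 1$ (since $e\in H_W$) upgrades this to $m(e)\ge 2$; in every other case $\mathbf{1}[f_L\in T]+\mathbf{1}[f_R\in T]\le 1\le m(e)$ trivially. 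Summing $m(e)\ge\mathbf{1}[f_L\in T]+\mathbf{1}[f_R\in T]$ over edges of $H_W$ yields $|W|=\sum_e m(e)\ge\sum_{f\in T}|f|$, as required.

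The main obstacle I expect is ruling out that a face $f\in T$ is the unbounded face of $H_W$ and thereby absorbs nontrivial topology of $\Sigma$; this is precisely what the homology comparison prevents, by pinning every drilled hole inside $D$ with winding $1$. The cuff variant in the definition of non-essential is a minor technical nuisance absorbed by the same comparison, which zeros out the cuff coordinate of $[W]$.
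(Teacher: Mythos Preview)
Your approach is essentially the paper's: use non-essentiality to confine $H_W$ to a disk (possibly with one cuff), argue via homotopy/winding that every face of $T$ is a bounded interior face (hence $2$-cell), and then observe that an edge shared by two faces of $T$ must be traversed at least twice by $W$. The paper states this last step tersely (``if two faces of $T$ share an edge $e$, then $e$ appears in $W$ at least twice, since $W$ has the same winding number $\pm1$ around both faces''); your parity identity $m(e)\equiv n_{f_L}-n_{f_R}\pmod 2$ spells out the same point, and your summation $|W|=\sum_e m(e)\ge\sum_{f\in T}|f|$ is exactly the paper's conclusion.

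One small caution on the homology step: your coefficient-matching in $H_1(\Sigma')$ tacitly assumes that the loops around the drilled holes (and the cuff loop, in that case) are linearly independent there, but the boundary classes of $\Sigma'$ satisfy one relation. On the sphere this can make the matching ambiguous---for instance with a single drilled hole, $\Sigma'$ is a disk, $[\partial\Lambda]=0$, and homology says nothing about where the hole sits relative to $D$. The paper's bare assertion ``$f_0,f_1\notin T$'' is equally brisk at the same point. The easy fix on the sphere is to choose $D$ as the complement of a point lying outside every drilled hole, so that all holes lie in $D$ by construction; off the sphere the single relation among boundary classes involves a class not supported in $H_1(D\cap\Sigma')$, and your matching goes through as written.
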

\begin{proof}
Since $G$ contains no connected essential subgraph with at most $|W|$ edges, we conclude that there exists $\Lambda_H\subseteq \Sigma$ containing
$H_W$ such that $\Lambda_H$ is either an open disk, or an open disk with a hole whose boundary is equal to a cuff $C$ of $\Sigma$.
In the latter case, let $\Lambda'_H$ be the open disk obtained from $\Lambda_H$ by patching the hole corresponding to $C$.
In the former case, let $\Lambda'_H=\Lambda_H$.

Consider $H_W$ as embedded in $\Lambda'_H$.
Let $f_0$ be the face of $H_W$ containing the boundary of $\Lambda'_H$,
and if $\Lambda'_H\neq\Lambda_H$, then let $f_1$ be the face of $H_W$ containing $\Lambda'_H\setminus\Lambda_H$.
Since $H_W$ is connected, all faces of $H_W$ except for $f_0$ are $2$-cell.  Since $W$ is homotopically
equivalent to the boundary of $\Lambda$, we have $f_0,f_1\not\in T$.  Consequently, all faces of $T$ are $2$-cell.
Furthermore, if two faces of $T$ share an edge $e$, then $e$ appears in $W$ at least twice (since the boundary of $\Lambda$,
and thus also $W$, has the same winding number $\pm 1$ around both of the faces).  Therefore, $\sum_{f\in T} |f|\le|W|$.
\end{proof}

We can now design the following subroutine.

\begin{lemma}\label{lemma-testfree1}
For any integer $k\ge 0$ and a surface $\Sigma$ other than the sphere, there exists a linear-time algorithm as follows.
Let $G$ be a triangle-free graph with a $2$-cell embedding in $\Sigma$ such that every connected essential
subgraph of $G$ has more than $k$ edges, and let $f$ be a face of $G$.
The algorithm decides whether $\{f\}$ is $k$-free, and if not, returns a closed walk $W$ of length at most $k$
that binds $\{f\}$ with respect to an open disk $\Lambda\subseteq\Sigma$ such that among all such walks,
$|W|$ is minimal, and additionally $\{\Lambda\}$ is $k$-free in $G-\Lambda$.
\end{lemma}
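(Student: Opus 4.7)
The plan is to reduce the search for a binding walk to the homotopy-class shortest-walk queries provided by Lemma~\ref{lemma-dataemb}. Drill a hole inside the face $f$ to obtain a surface $\Sigma'$ with one additional cuff $C_f$; the graph $G$ remains $2$-cell embedded in $\Sigma'$. A closed walk $W$ in $G$ of length at most $k$ bounds an open disk $\Lambda\subseteq\Sigma$ with $f\subseteq\Lambda$ exactly when, viewed in $\Sigma'$, $W$ is homotopic to the boundary of any open disk $\Lambda^\ast\subseteq\Sigma=\Sigma'+\widehat{C_f}$ containing the patch of $C_f$; and such $W$ binds $\{f\}$ precisely when in addition $W\ne\partial f$ (so that $\Lambda\ne f$) and $s(|W|)\le w_0(f)$.

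Initialize the data structure of Lemma~\ref{lemma-dataemb} for $G$ in $\Sigma'$ with parameter $k$ in $O(|V(G)|)$ time. For each vertex $v\in V(G)$ invoke query~(c) with $D=\{C_f\}$ and the disk $\Lambda^\ast$ above to obtain, in amortized constant time, a shortest closed walk $W_v$ of length at most $k$ through $v$ in the desired homotopy class (if one exists). The trivial possibility $W_v=\partial f$ can occur only when $|\partial f|\le k$; since then $|\partial f|$ is bounded by a constant, we additionally run the vertex loop once more after removing each single edge of $\partial f$ (using a freshly initialized copy of the data structure each time). Any walk of length at most $|\partial f|$ distinct from $\partial f$ must omit some edge of $\partial f$ and so is discovered in at least one such run. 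Among all walks produced, discard those with $s(|W|)>w_0(f)$; if nothing remains, declare $\{f\}$ to be $k$-free. Otherwise let $W$ be a surviving walk of minimum length $\ell$, with bounded disk $\Lambda\supsetneq f$. The total cost so far is $O(|V(G)|)$.

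It remains to enforce the condition that $\{\Lambda\}$ is $k$-free in $G-\Lambda$. Suppose a walk $W'$ of length at most $k$ in $G-\Lambda$ binds $\{\Lambda\}$ with respect to $\Lambda'\supsetneq\Lambda$. Then $s(|W'|)\le w_0(\Lambda)=s(\ell)$ forces $|W'|\le\ell$, while $\Lambda'\supsetneq f$ makes $W'$ a binding walk of $\{f\}$ in $G$ and so forces $|W'|\ge\ell$; consequently $|W'|=\ell$ and $(W',\Lambda')$ is itself a minimum-length binding pair. We therefore replace $(W,\Lambda)$ by $(W',\Lambda')$, delete the vertices and edges of $\Lambda'\setminus\Lambda$ from the data structure using operation~(a), and iterate. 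The key observation that keeps the enlargement phase linear is monotonicity: deleting vertices and edges can only increase the length of the shortest walk through any remaining vertex in the prescribed homotopy class, so a vertex that has once failed to yield a length-$\le\ell$ walk continues to fail in every later iteration. Hence each vertex of $G$ is queried at most once across the whole enlargement.

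The principal obstacle I expect is engineering this monotonicity argument into a genuinely linear implementation: we must identify the new boundary walk of the enlarged disk and the set $\Lambda'\setminus\Lambda$ in amortized constant time per absorbed edge, maintain an efficient queue of unqueried vertices, and verify that the amortized bounds of Lemma~\ref{lemma-dataemb} survive the interleaving of deletions and queries. The hypothesis that every connected essential subgraph of $G$ has at least $k+1$ edges legitimizes the underlying topology via Lemma~\ref{lemma-simpc}, ensuring that every walk of length at most $k$ we encounter really does bound a genuine open disk in $\Sigma$ and that the chain $\Lambda\subsetneq\Lambda'\subsetneq\cdots$ is a well-defined increasing sequence of disks; this guarantees termination with a pair $(W,\Lambda)$ of the required form.
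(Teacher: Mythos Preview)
Your proposal is correct and follows essentially the same approach as the paper: drill a hole in $f$, use Lemma~\ref{lemma-dataemb} to locate shortest walks in the prescribed homotopy class, extract the bounding disk via Lemma~\ref{lemma-simpc}, and iteratively enlarge while deleting absorbed vertices, relying on the monotonicity of the minimum length to keep the total work linear. The one difference is that the paper avoids your edge-removal trick entirely: it simply allows the initial shortest walk to be $\partial f$ itself, runs the enlargement phase as usual, and only at the very end checks whether $\Lambda$ ever grew strictly beyond $f$---if not, it declares $\{f\}$ to be $k$-free.
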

\begin{proof}
Let $\Sigma'$ be the surface obtained from $\Sigma$ by drilling a hole inside $f$, and let
$\Lambda'$ be the open disk removed from $\Sigma$ in order to create $\Sigma'$.
Build the data structure of Lemma~\ref{lemma-dataemb} for $G$ with $d=k$,
and find (in linear time) a shortest walk $W_0$ of length at most $k$ such that $W_0$ is homotopically equivalent in $\Sigma'$ to the boundary
of $\Lambda'$.  If no such walk exists, then $\{f\}$ is $k$-free.

Otherwise, let $t=|W_0|$.  Let $H_0$ be the subgraph of $G$ consisting of the vertices and edges of $G$ contained in $W_0$,
and let $\Lambda_0$ be the face of $H_0$ containing $f$.  By Lemma~\ref{lemma-simpc}, $\Lambda_0$ is an open disk bounded by a closed
walk $W'_0$ of length at most $t$.  By the minimality of $|W_0|$, it follows that $|W'_0|=t$.

We now proceed as follows.  Set $W=W'_0$ and $\Lambda=\Lambda_0$, and remove in the data structure all vertices and edges drawn
in $\Lambda$ so that the data structure
now represents $G-\Lambda$.  We process all remaining vertices not contained in $W$ in order, and for each of them test in constant time
whether there exists a closed walk of length $t$ passing through it and homotopically equivalent to the boundary of $\Lambda'$.
Whenever we find such a walk, we repeat the procedure of the previous paragraph and replace $W$ and $\Lambda$
by the obtained walk and the open disk bounded by it, and remove the vertices and edges so that the data structure represents
$G-\Lambda$.

At the end of this procedure, we end up with a closed walk $W$ and an open disk $\Lambda$ satisfying the conclusions of the lemma
(if $\Lambda=f$, then $\{f\}$ is $k$-free).
\end{proof}

Let us remark that in the setting of Lemma~\ref{lemma-testfree1}, if $\{f\}$ is not $k$-free, then the reduction of Corollary~\ref{cor-surfdisk} applies
($\{f\}$ is $(|W|-2)$-free since $|W|$ is minimal).
Another subroutine is used to search for a binding walk around several faces.

\begin{lemma}\label{lemma-testfree2}
For all integers $k,b\ge 0$ and a surface $\Sigma$ other than the sphere, there exists a linear-time algorithm as follows.
Let $G$ be a triangle-free graph with a $2$-cell embedding in $\Sigma$ such that every connected essential
subgraph of $G$ has more than $k$ edges, and let $S$ be a set of faces of $G$ such that
$|S|\le b$.  The algorithm decides whether $S$ is $k$-free, and if not, returns a shortest closed walk $W$
that binds $S$.
\end{lemma}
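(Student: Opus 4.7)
The strategy is to enumerate the possible values of $S \cap \Lambda$, where $\Lambda$ is the open disk bounded by a sought binding walk. Since $|S| \le b$, there are $O(2^b) = O(1)$ non-empty subsets $S' \subseteq S$, and for each I wish to find the shortest closed walk $W_{S'}$ of length at most $k$ in $G$ that bounds an open disk $\Lambda_{S'} \subseteq \Sigma$ with $\{f \in S : f \subseteq \Lambda_{S'}\} = S'$ and $\Lambda_{S'} \notin S$; the shortest $W_{S'}$ satisfying the binding inequality $\sum_{f \in S'} w_0(f) \ge s(|W_{S'}|)$, minimized over $S'$, is the output (or $S$ is declared $k$-free if no such walk exists).

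When $S' = \{f\}$, I invoke Lemma~\ref{lemma-testfree1} on $f$: it returns a shortest walk $W_f$ of length at most $k$ binding $\{f\}$, and such a walk automatically binds $S$ as well, since the weight contribution from $f$ alone is sufficient and $\Lambda_f \ne f$ forces $\Lambda_f$ not to be a single face of $G$, so $\Lambda_f \notin S$. When $|S'| \ge 2$, $\Lambda_{S'}$ contains at least two faces of $G$ and so cannot itself be a face in $S$; to locate $W_{S'}$, I drill a hole inside each face of $S$ to produce a surface $\Sigma_S$ with $|S|$ marker cuffs (inserting short triangle-free auxiliary cycles inside each face of $S$ if needed to preserve the $2$-cellness required by Lemma~\ref{lemma-dataemb}), and then invoke query (c) of that lemma on $G$ in $\Sigma_S$ with $d = k$ and $D$ equal to the $S'$-cuffs, iterating over all vertices $v \in V(G)$. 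The homotopy class of walks whose bounded disk in $\Sigma_S$-with-$S'$-cuffs-patched contains exactly the $S'$-patches (and no other new or original cuff) is precisely the class we want, so the shortest walk returned across $v$ is $W_{S'}$.

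For correctness, if $W^*$ is a shortest walk binding $S$, with disk $\Lambda^*$ and $S^* = \{f \in S : f \subseteq \Lambda^*\}$, then either $|S^*| = 1$ (and the corresponding invocation of Lemma~\ref{lemma-testfree1} returns a walk at most as long as $W^*$) or $|S^*| \ge 2$ (and the enumeration with $S' = S^*$ yields a walk at most as long). Since $s$ is non-decreasing, the shortest walk in each homotopy class is a binding walk whenever any walk in that class is, so the single check of the binding inequality per subset suffices. The total running time is linear: $|S| = O(1)$ invocations of Lemma~\ref{lemma-testfree1} (each linear), one initialization of the data structure of Lemma~\ref{lemma-dataemb} in $O(|V(G)|)$ time, $O(2^b)$ subsets of size at least $2$ each handled by $O(|V(G)|)$ amortized-constant-time queries, and $O(1)$-size post-processing checks per candidate walk.

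The main technical obstacle is the topological bookkeeping around the drilled surface $\Sigma_S$: one must verify that Lemma~\ref{lemma-dataemb} applies to $G$ in $\Sigma_S$ (handled by the auxiliary cycles, which remain triangle-free because they lie deep inside the faces of $S$) and that the reference disk $\Lambda$ chosen in the query captures exactly the homotopy class of walks whose bounded disk has $S$-intersection equal to $S'$, not a superset, and is disjoint from all original cuffs of $\Sigma$. Since $|S| \le b$ is bounded, specifying the correct reference disk and verifying these conditions can be done in constant time per candidate walk.
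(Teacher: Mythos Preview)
Your overall strategy---enumerate subsets $S'\subseteq S$ and use the data structure of Lemma~\ref{lemma-dataemb} to find a shortest closed walk in the corresponding homotopy class---is the same as the paper's.  The paper drills holes only in the current subset $D$, while you drill holes in all of $S$; this is a cosmetic difference and either choice can be made to work.

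There is, however, a genuine gap in the $|S'|\ge 2$ case.  Query~(c) of Lemma~\ref{lemma-dataemb} returns a shortest closed walk $W$ that is \emph{homotopic} in $\Sigma_S$ to the boundary of a reference disk surrounding the $S'$-patches.  Such a walk need not itself bound an open disk in $\Sigma$: the subgraph $H_W$ may have several $2$-cell faces, each containing only a proper subset of the $S'$-holes (a figure-eight, for example, when $f_1,f_2\in S'$ are separated by a bridge).  Your sentence ``the shortest walk in each homotopy class is a binding walk whenever any walk in that class is'' is therefore unjustified: the definition of ``$W$ binds $S$'' requires $W$ to actually be the boundary of an open disk $\Lambda$, and checking the inequality $\sum_{f\in S'}w_0(f)\ge s(|W|)$ against the raw walk $W$ does not establish this.

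The paper closes this gap explicitly: after finding $W_0$, it applies Lemma~\ref{lemma-simpc} to the subgraph $H_{W_0}$, obtaining a set $T$ of $2$-cell faces containing the drilled holes with $\sum_{h\in T}|h|\le |W_0|$, and then uses the superadditivity of $s$ (namely $s\bigl(\sum_{h\in T}|h|\bigr)\ge\sum_{h\in T}s(|h|)$, strictly when $|T|\ge 2$) together with pigeonhole to single out one face $h\in T$ whose boundary walk genuinely binds $S$.  This extraction step is what is missing from your argument; without it the returned object may fail to be a binding walk at all, and consequently the claimed ``shortest binding walk'' output is not established.
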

\begin{proof}
We can assume that $S$ is non-empty, as otherwise it is trivially $k$-free.
Using the algorithm of Lemma~\ref{lemma-testfree1}, we test whether single-element subsets of $S$ are $k$-free,
and record the obtained shortest walks binding them (if any).

Next, we perform the following steps for every set $D\subseteq S$ such that $|D|\ge 2$.
Let $t\le k$ be the largest integer such that $\sum_{f\in D} w_0(f)\ge s(t)$.
Let $\Sigma'$ be the surface obtained from $\Sigma$ by drilling a hole inside each face of $D$.
Let $\Lambda$ be an open disk in $\Sigma$ containing $\Sigma\setminus\Sigma'$.
Build the data structure of Lemma~\ref{lemma-dataemb} for $G$ with $d=t$,
and determine whether there exists a closed
walk $W_0$ of length at most $t$ that is homotopically equivalent in $\Sigma'$ to the boundary of $\Lambda$.

In case such a walk $W_0$ exist, choose a shortest one.
Let $H_{W_0}$ be the subgraph of $G$ consisting of the vertices and edges of $W_0$ and
let $T$ be the set of faces of $H_{W_0}$ intersecting $\Sigma\setminus\Sigma'$.  By Lemma~\ref{lemma-simpc}, we have
$$\sum_{f\in D} w_0(f)\ge s(t)\ge s\left(\sum_{h\in T} |h|\right)\ge \sum_{h\in T} s(h),$$
and either $|T|=1$ or the last inequality is strict.
Hence, there exists $h\in T$ bounded by a closed walk $W$ such that $\sum_{f\in D, f\subseteq h} w_0(f)\ge w(h)$,
and either the inequality is strict, or $\bigcup D\subseteq h$.  Therefore, $W$ binds $S$, and we record $W$.

In the end, we return a shortest recorded walk that binds $S$ (or that $S$ is $k$-free if no walk was recorded).
\end{proof}

Again, if $S$ is not $k$-free, then the minimality of $|W|$ ensures that in the situation of Lemma~\ref{lemma-testfree2}, $S$ is $(|W|-2)$-free.
We can now describe the reduction algorithm.

\begin{lemma}\label{lemma-liberate}
For an integer $k\ge 4$, a rational number $r\ge 0$ and a surface $\Sigma$ other than the sphere, there exists a linear-time algorithm as follows.
Let $G$ be a triangle-free graph with a $2$-cell embedding in $\Sigma$
such that every connected essential subgraph of $G$ has more than $k$ edges.  The algorithm returns a subgraph $G'\subseteq G$
whose embedding in $\Sigma$ induced by the embedding of $G$ is $2$-cell such that
\begin{itemize}
\item every boundary cycle of $G$ belongs to $G'$,
\item every $3$-coloring of $G'$ extends to a $3$-coloring of $G$, and
\item either $w_0(G')\le r$, or $G'$ contains a $k$-free set of faces $S$ such that
$$\sum_{f\in S} w_0(f)>r.$$
\end{itemize}
\end{lemma}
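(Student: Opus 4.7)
The plan is to grow a $k$-free set of faces $S$ greedily while iteratively simplifying $G'$ via Corollary~\ref{cor-surfdisk}, using the data structure of Lemma~\ref{lemma-dataemb} to keep all operations amortized constant and the whole algorithm linear.

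First I apply Lemma~\ref{lemma-elim24} to pass from $G$ to a subgraph $G'$ in which every contractible cycle of length at most $4$ bounds a $2$-cell face. Any $3$-coloring of $G'$ extends to $G$ via Theorem~\ref{thm-ext45}, and the hypothesis that every essential connected subgraph has at least $k+1$ edges is inherited from $G$ to $G'$ since essentiality is a topological property of the embedding in $\Sigma$. I always keep every $4$-face of $G'$ in $S$ and maintain an explicit subset $S^{*}\subseteq S$ of the $(\ge 5)$-faces in $S$ (initially $S^{*}=\emptyset$). Since $w_0$ vanishes on $4$-faces, the total weight of $S$ equals that of $S^{*}$; and since any contractible closed walk of length at most $4$ in $G'$ bounds a $4$-face which lies in $S$, such walks fail the condition $\Lambda\notin S$ and cannot bind $S$. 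Consequently, $k$-freeness of $S$ is equivalent to the non-existence of a binding walk of length between $5$ and $k$ around $S^{*}$, which is decidable by Lemma~\ref{lemma-testfree2} with a constant bound $b$ on $|S^{*}|$.

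I then enqueue all $(\ge 5)$-faces of $G'$ and process them one by one. For each dequeued face $f$, I tentatively add $f$ to $S^{*}$ and test $k$-freeness. If the test succeeds, I commit $f$, and return $(G',S)$ the moment $\sum_{g\in S^{*}} w_0(g) > r$. Otherwise the test produces a shortest binding walk $W$ around a disk $\Lambda$; by the minimality of $|W|$ the set $S$ is $(|W|-2)$-free, and since $s(n) > s(n-2)$ for all $n\ge 5$, the binding inequality $\sum_{g\in S,\,g\subseteq\Lambda} w_0(g) \ge s(|W|) > s(|W|-2)$ places us in the hypothesis of Corollary~\ref{cor-surfdisk}. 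I apply that corollary to excise the interior of $\Lambda$ from $G'$ (extension of colorings is preserved), drop any face of $S^{*}$ lying inside $\Lambda$, enqueue the freshly created face of length $|W|$ in their place, and continue. When the queue empties, every $(\ge 5)$-face of the final $G'$ has been committed to $S^{*}$ and so $w_0(G')\le r$; otherwise the early return yields the desired $k$-free $S$ with $\sum_{g\in S} w_0(g) > r$. The boundary cycles survive throughout, since Lemma~\ref{lemma-elim24} preserves them and Corollary~\ref{cor-surfdisk} only excises the interior of a disk in $\Sigma$, which is disjoint from every cuff.

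The main obstacle will be achieving linear time, since a direct implementation would invoke Lemma~\ref{lemma-testfree2} $\Theta(|V(G)|)$ times at $\Theta(|V(G)|)$ cost each. Two ingredients rescue linearity. First, $|S^{*}|$ remains bounded by the constant $b := \lceil 4113r/4\rceil + 1$, because each element of $S^{*}$ contributes at least $s(5)=4/4113$ and the algorithm halts as soon as $\sum_{g\in S^{*}} w_0(g) > r$. Second, I keep a single persistent instance of the data structure of Lemma~\ref{lemma-dataemb} with $d=k$, updated under each vertex or edge deletion produced by Corollary~\ref{cor-surfdisk}; inlining the proof of Lemma~\ref{lemma-testfree2}, each $k$-freeness test reduces to iterating over the $O(2^{b})=O(1)$ subsets of $S^{*}\cup\{f\}$ and performing one homotopy query per subset, each answered in amortized constant time. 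Because the queue is dequeued $O(|V(G)|)$ times and the reductions delete each vertex and edge at most once, the total work is $O(|V(G)|)$.
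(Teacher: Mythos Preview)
Your correctness argument is essentially sound, but the linear-time analysis has a genuine gap: the claim that ``each $k$-freeness test reduces to \ldots\ one homotopy query per subset, each answered in amortized constant time'' is not supported by Lemma~\ref{lemma-dataemb}. Every query in that data structure is tied to a \emph{specific vertex} $v$; to detect whether a binding walk exists anywhere in the graph one must scan all vertices, which is $\Theta(|V(G')|)$ work per test (this is exactly what the proofs of Lemmas~\ref{lemma-testfree1} and~\ref{lemma-testfree2} do). Moreover, the data structure is built for a fixed surface; drilling holes in the faces of the current subset $D\subseteq S^*\cup\{f\}$ changes $\Sigma$ and forces a rebuild, and you cannot pre-drill all potential holes since the faces in $S^*$ vary over time and could number $\Theta(n)$ in total (the amortized constant in Lemma~\ref{lemma-dataemb} depends on the number of cuffs). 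With $\Theta(n)$ dequeue steps and $\Theta(n)$ time per freeness test, your algorithm is quadratic, not linear.

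The paper sidesteps this entirely: instead of processing a queue of all $(\ge 5)$-faces, it shows that the number of iterations is bounded by a \emph{constant} depending only on $k$ and $r$. The mechanism is that whenever a face is added to $S_i$ (either via Lemma~\ref{lemma-testfree1} when $S_i$ is $k$-free, or by collapsing a bound subset when it is not), the new face is strictly longer than every face it replaces, so the multiset of face-lengths in $S_i$ increases strictly in the lexicographic order; since only $m_0$ such multisets have total weight at most $r$, and at most $b$ consecutive steps can have $S_i$ not $k$-free, the process halts after at most $b(m_0+1)$ rounds. Each round then calls the linear-time subroutines once, giving overall linear time. Your greedy queue does not enjoy this monotonicity (you add faces in arbitrary order without first enlarging each candidate to a $k$-free singleton), so you cannot bound the number of iterations by a constant, and the persistent-data-structure shortcut does not rescue you.
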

\begin{proof}
Let $b=\lceil r/s(5)\rceil+1$.  Note that there exist only finitely many multisets $M$ of integers greater than $4$
such that $\sum_{t\in M} s(t)\le r$.  Let $m_0$ denote the number of such multisets.

We construct a sequence $(G_0,S_0)$, $(G_1,S_1)$, \ldots, $(G_m,S_m)$
such that for $i=0,\ldots, m$, $G_i$ is a subgraph of $G$, $S_i$ is a set of faces of $G_i$, and
\begin{itemize}
\item the embedding of $G_i$ in $\Sigma$ is $2$-cell and every boundary cycle of $G$ belongs to $G_i$,
\item every $3$-coloring of $G_i$ extends to a $3$-coloring of $G$,
\item $|S_i|\le b$, and
\item for every $f\in S_i$, $|f|\ge 5$ and the set $\{f\}$ is $k$-free in $G_i$.
\end{itemize}
Let $G_0$ be the subgraph of $G$ obtained by applying the algorithm of Lemma~\ref{lemma-elim24} and suppressing
all faces of length two, and let $S_0=\emptyset$.  Assuming we already constructed $(G_i,S_i)$ for some $i\ge 0$, we proceed as follows.

We apply the algorithm of Lemma~\ref{lemma-testfree2} to test whether $S_i$ is $k$-free in $G_i$.  If not, let $W_0$ be the closed
walk returned by the algorithm that binds $S_i$ with respect to an open disk $\Lambda_0\subset\Sigma$.
By Corollary~\ref{cor-surfdisk}, every $3$-coloring of $G_i-\Lambda_0$ extends to a $3$-coloring of $G_i$, and thus also to a $3$-coloring of $G$.
Apply the algorithm of Lemma~\ref{lemma-testfree1} to $G_i-\Lambda_0$ and its face $\Lambda_0$, and
let $W$ be the closed walk returned by the algorithm that binds $\{\Lambda_0\}$ with respect to an open disk $\Lambda\subset\Sigma$,
or set $W=W_0$ and $\Lambda=\Lambda_0$ when $\{\Lambda_0\}$ already is $k$-free.
Let $G_{i+1}=G_i-\Lambda$ and let $S_{i+1}$ be obtained from $S_i$ by removing the faces contained in $\Lambda$ and by adding $\Lambda$.
Note that by Corollary~\ref{cor-surfdisk}, every $3$-coloring of $G_{i+1}$ extends to a $3$-coloring of $G_i-\Lambda_0$, and thus
also to a $3$-coloring of $G$.  Furthermore, $|S_{i+1}|\le |S_i|\le b$.

Hence, assume that $S_i$ is $k$-free in $G_i$.  If either $\sum_{f\in S_i} w_0(f)>r$ or $S_i$ contains all faces of $G_i$ of length at least $5$, then we
set $m=i$ and end the procedure.
Otherwise, note that $|S_i|\le r/s(5)\le b-1$, and let $f\not\in S$ be a face of $G_i$ of length at least $5$.
Let $\Lambda\subset\Sigma$ be an open disk found using Lemma~\ref{lemma-testfree1} applied for $f$ such that
$f\subseteq \Lambda$ and $\Lambda$ is $k$-free in $G_i-\Lambda$.  By the choice of $G_0$, the boundary walk of $\Lambda$ has
length at least $5$.  Let $G_{i+1}=G_i-\Lambda$ and let $S_{i+1}$ be obtained from $S_i$ by removing the faces contained in $\Lambda$ and by adding $\Lambda$.
Note that by Corollary~\ref{cor-surfdisk}, every $3$-coloring of $G_{i+1}$ extends to a $3$-coloring of $G_i$, and thus also
to a $3$-coloring of $G$.

This finishes the description of the construction of the sequence $G_0$, $G_1$, \ldots, $G_m$.
Let us now give a bound on the length of the sequence.
For $i\ge 0$, let $M_i$ denote the sequence of the lengths of faces in $S_i$ in the non-increasing order.
Observe that in order to obtain $S_{i+1}$, some subset $X$ of faces in $S_i$ is replaced by another face $\Lambda$, and
every face in $X$ is strictly shorter than $\Lambda$ by the assumption that $\{f\}$ is $k$-free in $G_i$ for every $f\in S_i$.
Hence, $M_{i+1}$ is lexicographically strictly larger than $M_i$.  Consequently, $M_i\neq M_j$ for every $i\neq j$,
and thus there exist at most $m_0$ values of $i$ such that $\sum_{f\in S_i} w_0(f)\le r$.
Furthermore, observe that if $S_i$ is not $k$-free, then $|S_{i+1}|<|S_i|$, and thus there do not exist $b$ consecutive
values of $i$ such that $S_i$ is not $k$-free.  We conclude that the described algorithm terminates after $m\le b(m_0+1)$ steps,
which is a constant depending only on $r$.

By the construction, either $S_m$ is $k$-free in $G_m$ and $\sum_{f\in S_m} w_0(f)>r$, or $w_0(G_m)\le r$.  Therefore,
we can set $G'=G_m$.
\end{proof}

As a corollary, we obtain the special case of Theorem~\ref{thm-mainalg} when $\Sigma$ is the disk.

\begin{corollary}\label{cor-mainalg-disk}
For every integer $n$, there exists a linear-time algorithm with input
\begin{itemize}
\item $G$: a triangle-free graph embedded in the disk $\Delta$ with the boundary cycle $B$ of length at most $n$, and
\item $\psi$: a $3$-coloring of $B$,
\end{itemize}
which either correctly decides that there exists a $3$-coloring $\varphi$ of $G$ extending $\psi$,
or returns a subgraph $G'\subseteq G$ such that $B\subseteq G'$, $w_\eta(G')\le s(|B|-2)$ and $\psi$ does not extend to a $3$-coloring of $G'$.
\end{corollary}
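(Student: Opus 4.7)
Proof proposal. The plan is to combine Lemma~\ref{lemma-liberate} with Lemma~\ref{lemma-disk} in one direction and with Theorem~\ref{thm-quadalg} in the other, after handling small boundaries via Theorem~\ref{thm-ext45}.

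If $|B|\le 5$, Theorem~\ref{thm-ext45} produces in linear time a $3$-coloring of $G$ extending $\psi$, so the algorithm outputs ``yes''. Assume henceforth $|B|\ge 6$ and set $k=|B|-2\ge 4$ and $r=s(|B|-2)$. After removing components of $G$ disjoint from $B$ (each is separately $3$-colorable by Gr\"otzsch's theorem), we may assume $G$ is connected and hence $2$-cell embedded in $\Delta$. Any connected essential subgraph of $G$ in $\Delta$ must contain the cuff $B$ and so has at least $|B|\ge k+1$ edges, so the hypothesis of Lemma~\ref{lemma-liberate} holds. Applying Lemma~\ref{lemma-liberate} with these parameters produces $G'\subseteq G$ with $B\subseteq G'$ such that every $3$-coloring of $G'$ extends to a $3$-coloring of $G$, and either $G'$ contains a $k$-free set $S$ of faces with $\sum_{f\in S}w_0(f)>r$ (Case A), or $w_0(G')\le r$ (Case B).

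In Case A, Lemma~\ref{lemma-disk} applied to $G'$ implies that every $3$-coloring of $B$, in particular $\psi$, extends to $G'$ and hence to $G$, so the algorithm outputs ``yes''. In Case B, each face $h$ of $G'$ is a $2$-cell face of $\Delta$, so $\Sigma_h$ is a closed disk with $s(\Sigma_h)=0$, giving $w_\eta(G')=w_0(G')\le s(|B|-2)$, as required by the statement. It remains to test whether $\psi$ extends to $G'$; if it does, output ``yes'', and otherwise return $G'$ as the witness.

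To perform this test in linear time, observe that $w_0(G')\le r$ bounds both the number $t$ of faces of $G'$ of length at least $5$ and the total length of their boundary walks by constants depending only on $n$. Treat these ``big'' faces $f_1,\ldots,f_t$ as holes, so that $G'$ becomes a quadrangulation $\tilde G'$ of the surface $\Sigma'$ obtained from $\Delta$ by drilling a hole in each $f_i$; since $t$ is bounded, $\Sigma'$ ranges over a finite set of surfaces of Euler genus $0$. Enumerate all $3$-colorings of the walks bounding $f_1,\ldots,f_t$ (a constant number of choices, since their total length is bounded) and for each choice invoke Theorem~\ref{thm-quadalg} on $\tilde G'$ in $\Sigma'$ with precoloring consisting of $\psi$ on $B$ together with the chosen colors on the new cuffs. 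Then $\psi$ extends to $G'$ if and only if at least one enumeration succeeds.

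The main technical obstacle is that Theorem~\ref{thm-quadalg} requires the boundary of $\Sigma'$ to trace pairwise vertex-disjoint cycles in the quadrangulation, while the walks bounding the big faces $f_i$ may repeat vertices, traverse bridge edges twice, or share vertices with $B$ or with each other. The standard fix is to split the offending vertices and edges so that each hole boundary becomes a simple cycle disjoint from the others, and to restrict the enumerated precolorings to those in which split copies of the same original vertex receive identical colors. Since these modifications affect only $O(1)$ vertices and edges, the overhead is constant and the overall running time remains linear in $|V(G)|$.
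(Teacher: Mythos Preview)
Your proof is correct and follows essentially the same approach as the paper: apply Lemma~\ref{lemma-liberate} with $k=|B|-2$ and $r=s(|B|-2)$, then invoke Lemma~\ref{lemma-disk} when a large free set is found and Theorem~\ref{thm-quadalg} (after fixing colors on the bounded number of vertices incident with $(\ge\!5)$-faces) otherwise. You are in fact more careful than the paper's terse argument on several points it leaves implicit---the separate treatment of $|B|\le 5$ (needed since Lemma~\ref{lemma-liberate} requires $k\ge 4$), the verification that $w_\eta(G')=w_0(G')$ in the disk, and the interface issues with Theorem~\ref{thm-quadalg}---though your justification of the essential-subgraph hypothesis is slightly off (in the disk there simply are no essential subgraphs, so the condition holds vacuously).
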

\begin{proof}
Let $G'$ be the subgraph of $G$ obtained by applying the algorithm of Lemma~\ref{lemma-liberate} with $k=|B|-2$ and $r=s(|B|-2)$.
Note that $\psi$ extends to a $3$-coloring of $G$ if and only if it extends to a $3$-coloring of $G'$.
If $w_0(G')\le r=s(|B|-2)$, then we can decide whether $\psi$ extends to a $3$-coloring of $G'$ using the algorithm of Theorem~\ref{thm-quadalg},
by testing all the possible colorings of the vertices incident with faces of $G'$ of length greater than $4$ (there are at most $5s(|B|-2)/s(5)$ such vertices).
Otherwise, $G'$ contains a $(|B|-2)$-free set of total weight greater than $s(|B|-2)$, and Lemma~\ref{lemma-disk} implies that
$\psi$ extends to a $3$-coloring of $G'$.
\end{proof}

We often use the following consequence.
\begin{corollary}\label{cor-mainalg-cuttodisk}
For every surface $\Sigma$ and integer $n$, there exists a linear-time algorithm with input
\begin{itemize}
\item $G$: a triangle-free graph embedded in $\Sigma$,
\item $\psi$: a $3$-coloring of a subgraph $B$ of $G$,
\item $Q$: a subgraph of $G$ with at most $n$ vertices, such that $B\subseteq Q$ and every face of $Q$
containing vertices or edges of $G$ is $2$-cell,
\end{itemize}
which correctly decides whether there exists a $3$-coloring $\varphi$ of $G$ extending $\psi$.
\end{corollary}
\begin{proof}
The algorithm iterates over all $3$-colorings of $Q$ that extend $\psi$, and for each of them checks whether it extends to $G$.
To do so, it suffices to check for each face $f$ of $Q$ containing vertices or edges of $G$ (which is $2$-cell by the assumptions)
whether the corresponding precoloring of the boundary cycle of $G_f$ extends to a $3$-coloring of $G_f$;
this can be done using the algorithm from Corollary~\ref{cor-mainalg-disk}.
\end{proof}

\section{Linear-time decision algorithm}\label{sec-linalg}

We need a few more concepts to generalize the algorithm to all surfaces.
Suppose that a graph $G$ is embedded in a surface $\Sigma$ so that every cuff of $\Sigma$ traces a cycle in $G$, and
let $B_1$, \ldots, $B_m$ be the boundary cycles of $G$.  We say that a $3$-coloring $\psi$ of the boundary cycles of $G$
is \emph{locally blocked} of there exists $i\in \{1,\ldots,m\}$ and a simple closed curve $c$ in $\Sigma$ homotopically equivalent to $B_i$
such that the subgraph of $G$ consisting of the vertices and edges fully drawn between $c$ and $B_i$ (inclusive)
has no $3$-coloring matching $\psi$ on $B_i$.  Conversely, we say that a boundary cycle $B_i$ is \emph{$\psi$-irrelevant}
if $G$ has no $3$-coloring matching $\psi$ on all boundary cycles distinct from $B_i$.
We say that $G$ is \emph{boundary-linked} if $\Sigma$ has non-zero genus and
for $i=1,\ldots,m$, every cycle surrounding the cuff of $B_i$ has length at least $|B_i|$.
We now argue that boundary-linkedness prevents any coloring from being locally blocked.

\begin{lemma}\label{lemma-nonblock}
Let $G$ be a triangle-free graph embedded in a surface $\Sigma$ of non-zero genus
such that every cuff of $\Sigma$ traces a cycle in $G$.  If $G$ is boundary-linked, then no $3$-coloring of the boundary cycles
of $G$ is locally blocked.
\end{lemma}
\begin{proof}
Consider any simple closed curve $c$ in $G$ homotopically equivalent to a boundary cycle $C$ and a $3$-coloring $\psi$ of $C$.
Let $G_c$ be the subgraph of $G$ drawn between $C$ and $c$, and consider the drawing of $G_c$
in the disk obtained by cutting $\Sigma$ along $c$ and patching the hole bounded by $c$; let $f$ be the resulting face of $G_c$
containing the patch.  Since $G$ is boundary linked, $f$ has length at least $|C|$, and thus $w_0(f)\ge s(|C|)>s(|C|-2)$.
Moreover, the set $\{f\}$ is $(|C|-2)$-free. Hence, $\psi$ extends to a $3$-coloring of $G_c$ by Lemma~\ref{lemma-disk}.
\end{proof}

We now give a generalization of Lemma~\ref{lemma-disk}.
\begin{lemma}\label{lemma-gen}
Let $G$ be a triangle-free graph embedded in a surface $\Sigma$
such that every cuff of $\Sigma$ traces a cycle in $G$, and let $B$ denote the union of the boundary cycles of $G$.  Suppose that
every $4$-cycle in $G$ is contractible and every connected essential subgraph of $G$
has more than $k=\lceil w_\eta(B)\rceil$ edges.  Let $\psi$ be a $3$-coloring of $B$ that is not locally blocked.
If $G$ contains a $k$-free set $S$ of faces such that $\sum_{f\in S} w_0(f)>w_\eta(B)$,
then either $\psi$ extends to a $3$-coloring of $G$, or some boundary cycle of $G$ is $\psi$-irrelevant.
\end{lemma}
\begin{proof}
By Gr\"otzsch's theorem and Lemma~\ref{lemma-disk}, we can assume $\Sigma$ is neither the sphere nor the disk.

Suppose $\psi$ does not extend to a $3$-coloring of $G$.
By Theorem~\ref{thm-almstruct},
$G$ has a subgraph $H$ such that $B\subseteq H$, $w_\eta(H)\le w_\eta(B)$, and
$\psi$ does not extend to a $3$-coloring of $H$.
We can without loss of generality assume that every component of $H$ contains a non-contractible cycle
(components only containing contractible cycles are disjoint from the boundary and $3$-colorable by Gr\"otzsch's theorem,
and omitting them can only decrease $w_\eta(H)$).

Since $w_\eta(H)\le w_\eta(B)\le k$, every face $h$ of $H$ has length at most $k$.
If the embedding of $H$ were $2$-cell, then $\sum_{f\in S, f\subseteq h} w_0(f)\le w_0(h)$ would hold for every face $h$ of $H$,
since $S$ is $k$-free.  However, that would imply $\sum_{f\in S} w_0(f)\le w_0(H)\le w_\eta(B)$, contradicting the assumptions.

Therefore, $H$ contains a face $h$ that is not $2$-cell.  Consider any boundary walk $W$ of $h$.
Since $h$ has length at most $k$ and $G$ does not contain any connected essential subgraph with at most $k$ edges,
$W$ is either contractible or homotopically equivalent to a boundary cycle $C$.

Let $c$ be a simple closed curve drawn along $W$ inside the face $h$.
If $c$ were contractible, then the closed disk $\Delta\subseteq\Sigma$ bounded by $c$ would be disjoint from the drawing of $H$
(since every component of $H$ contains a non-contractible cycle), and thus the face $h$ would be $2$-cell.
Hence, $c$ is homotopically equivalent to a boundary cycle $C$.  Since $\psi$ is not locally blocked, it extends to a $3$-coloring
of the part $H_c$ of $H$ drawn between $C$ and $c$.  Since $\psi$ does not extend to a $3$-coloring of $H$, it follows that
it does not extend to a $3$-coloring of $H-V(H_c)$.  Since $H\subseteq G$, $\psi$ does not extend to a $3$-coloring of
$G-V(H_c)\subseteq G-V(C)$.  Hence, the boundary cycle $C$ is $\psi$-irrelevant.
\end{proof}

We can now give the algorithm for the special case where the input graph is boundary-linked,
has no non-contractible $4$-cycles, and no small connected essential subgraphs.

\begin{theorem}\label{thm-mainalg-spec}
For every surface $\Sigma$ and integer $n$, there exists a linear-time algorithm with input
\begin{itemize}
\item $G$: a triangle-free graph $2$-cell embedded in $\Sigma$ with boundary cycles $B$ of total length at most $n$, such that
\begin{itemize}
\item every $4$-cycle in $G$ is contractible,
\item every connected essential subgraph of $G$ has more than $\max(n, \lceil w_\eta(B)\rceil)$ edges, and
\item either $\Sigma$ has genus zero or $G$ is boundary-linked; and,
\end{itemize}
\item $\psi$: a $3$-coloring of $B$,
\end{itemize}
which either correctly decides that there exists a $3$-coloring $\varphi$ of $G$ extending $\psi$, or
returns a subgraph $G_0\subseteq G$ such that $B\subseteq G_0$, $w_\eta(G_0)\le w_\eta(B)$ and $\psi$ does not extend to a $3$-coloring of $G_0$.
\end{theorem}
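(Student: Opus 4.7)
The plan is to mimic the disk algorithm from Corollary~\ref{cor-mainalg-disk}, using Lemma~\ref{lemma-liberate} as the main reduction and Theorem~\ref{thm-almstruct} as the structural input. First I would apply the algorithm of Lemma~\ref{lemma-liberate} to $G$ with parameters $r = w_\eta(B)$ and $k_0 = \lceil w_\eta(B) \rceil + 8$; since the boundary $B$ has at most $k$ vertices, both $r$ and $k_0$ are constants depending only on $\Sigma$ and $k$. The value $k_0$ is chosen so that it upper-bounds the length of every face of any subgraph $H \subseteq G$ with $w_\eta(H) \le w_\eta(B)$, using $w_0(h) \ge |h|-8$ for $2$-cell faces and $w_0(h) = |h|$ for non-$2$-cell faces. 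This yields a subgraph $G' \subseteq G$, $2$-cell embedded in $\Sigma$ with $B \subseteq G'$, such that every $3$-coloring of $G'$ extends to a $3$-coloring of $G$, and either $w_0(G') \le w_\eta(B)$ or $G'$ contains a $k_0$-free set $S$ of faces with $\sum_{f \in S} w_0(f) > w_\eta(B)$. Because $G'$ is $2$-cell in $\Sigma$, $w_\eta(G') = w_0(G')$ in the first case.

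In the first alternative I would decide extendability of $\psi$ to $G'$ directly. Since $w_0(G') \le w_\eta(B)$ is a constant, only constantly many vertices of $G'$ are incident with faces of length different from $4$. Enumerating all $3$-colorings of these vertices extending $\psi$ gives constantly many candidates, and for each of them I would run Theorem~\ref{thm-quadalg} on the quadrangulated remainder. If some candidate extends, the algorithm answers yes; otherwise $G'$ itself serves as the required witness, since $B \subseteq G'$, $w_\eta(G') \le w_\eta(B)$, and $\psi$ does not extend to $G'$, hence not to $G$.

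In the second alternative the algorithm answers yes, and I would justify this by showing $\psi$ necessarily extends. Suppose for contradiction it does not. Apply Theorem~\ref{thm-almstruct} to $G'$ to obtain $H' \subseteq G'$ with $B \subseteq H'$, $w_\eta(H') \le w_\eta(B)$, and such that every $3$-coloring of the boundary of every face $h$ of $H'$ extends to $G'_h$. Since every $3$-coloring of $G'$ extends to $G$, the precoloring $\psi$ extends to $G$ if and only if it extends to $H'$, so $\psi$ does not extend to $H'$. I would then partition $S$ by which face of $H'$ contains each element, $S = \bigcup_{h} S_h$, and bound $\sum_{f \in S_h} w_0(f) \le w_\eta(h)$ for every face $h$ of $H'$. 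For a $2$-cell face $h$ this follows either trivially when $h \in S$ (in which case $S_h = \{h\}$), or from $k_0$-freeness when $h \notin S$: then $|h| \le k_0$ and $h$ is contractible, so that $\sum_{f \in S_h} w_0(f) < s(|h|) = w_\eta(h)$. For a non-$2$-cell face $h$, I would invoke the hypothesis that $G$ has no connected essential subgraph with at most $w_\eta(B)$ edges, forcing each component of $H'$ to be non-essential, and combine this topological restriction with the slack $\eta s(\Sigma_h)$ built into $w_\eta(h)$ to bound the contribution of $S_h$. Summing over $h$ then yields $\sum_{f \in S} w_0(f) \le w_\eta(H') \le w_\eta(B)$, contradicting the defining property of $S$.

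The main obstacle is the non-$2$-cell face bound in the second alternative: for a non-$2$-cell face $h$ of $H'$ the inequality $\sum_{f \in S_h} w_0(f) \le w_\eta(h)$ has to exploit both the non-essentialness of the components of $H'$ (which constrains the topology of $\Sigma_h$) and the penalty $\eta s(\Sigma_h)$ in $w_\eta(h)$. A secondary technical point is that the threshold $w_\eta(B)$ in the no-small-essential-subgraph hypothesis must be large enough to bound the edge count of each component of $H'$; this should hold up to an absolute constant factor, and if necessary the hypothesis can be rephrased accordingly with no substantive change to the algorithmic setting.
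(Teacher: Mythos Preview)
Your overall plan---apply Lemma~\ref{lemma-liberate}, handle the bounded-weight case via Theorem~\ref{thm-quadalg}, and in the free-set case invoke Theorem~\ref{thm-almstruct} to obtain $H'$ and sum over its faces---is exactly the paper's strategy, and the bounded-weight branch is fine. The difficulty you flag as ``the main obstacle'' is a genuine gap, and it is precisely where the paper's proof does something you have not.

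For a non-$2$-cell face $h$ of $H'$ there is no reason for $\sum_{f\in S_h} w_0(f)\le w_\eta(h)$ to hold. Freeness only constrains what lies inside a contractible closed walk bounding a disk; it says nothing about a face with nontrivial topology. The term $\eta\, s(\Sigma_h)$ is a fixed constant depending only on $\Sigma$, while the faces of $S$ lying in $h$ may carry essentially all of $\sum_{f\in S} w_0(f)$. Your fallback, that each component of $H'$ is non-essential, also fails: the hypothesis forbids connected essential subgraphs with at most $w_\eta(B)$ \emph{edges}, but $w_\eta(H')\le w_\eta(B)$ gives no edge bound on $H'$ because $4$-faces contribute zero to $w_0$. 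So neither the weight inequality nor the non-essentialness claim goes through, and rephrasing the hypothesis does not help with the first problem.

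The paper avoids this by an outer induction on the number of cuffs of $\Sigma$. Before running Lemma~\ref{lemma-liberate}, it recurses on each surface obtained by patching one cuff $C$, with the precoloring restricted to $B-V(C)$. If any of these recursive calls reports non-extendability, a witness $G'$ is assembled from the returned subgraph together with $C$. Once all of them succeed, one knows that whenever $\psi$ fails to extend to some $H\supseteq B$, that $H$ must be \emph{connected}. In the free-set branch the summing argument over $2$-cell faces (as you have it) forces $H'$ to have a non-$2$-cell face; but a non-$2$-cell face of length at most $k$ in a connected $H'$ would yield a small connected essential subgraph, contradicting the hypothesis, so $H'$ is disconnected. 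By the preliminary inductive step, $\psi$ then extends to $H'$, hence to $G'$ and to $G$. The induction on cuffs is the missing idea; it replaces your attempted weight inequality by a connectivity dichotomy.
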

\begin{proof}
We proceed by induction on the number of cuffs of $\Sigma$; hence, we can assume that the algorithm exists for all
surfaces obtained from $\Sigma$ by patching at least one cuff.  Furthermore, by Corollary~\ref{cor-mainalg-disk},
we can assume $\Sigma$ is not the sphere with at most one hole.

Apply the algorithm of Lemma~\ref{lemma-liberate} with $k=\lceil w_\eta(B)\rceil$ and $r=w_\eta(B)$, and let $G'$ be the resulting subgraph
such that every $3$-coloring of $G'$ extends to a $3$-coloring of $G$.  It suffices to decide whether $\psi$ extends to a $3$-coloring of $G'$.
Note that every $4$-cycle in $G'$ is contractible, every connected essential subgraph of $G'$ has more than $\max(n, k)$ edges, and
if $\Sigma$ has positive genus, then $G'$ is boundary-linked.

If $w_0(G')\le r$, we can decide whether $\psi$ extends to a $3$-coloring of $G'$ using the algorithm of Theorem~\ref{thm-quadalg},
by testing all the possible colorings of the vertices incident with faces of $G'$ of length greater than $4$ (there are at most $5w_\eta(B)/s(5)$ such vertices).
If it does, then $\psi$ also extends to a $3$-coloring of $G$.  If $\psi$ does not extend to a $3$-coloring of $G'$, then we can set $G_0=G'$.
Therefore, suppose that $w_0(G')>r$, and thus $G'$ contains a $k$-free set of faces $S$ such that $\sum_{f\in S} w_0(f)>r$.

Next, for each boundary cycle $C$, we recursively call the current algorithm for the embedding of $G'$ in the surface $\Sigma'$ obtained from
$\Sigma$ by patching the cuff of $C$ and for the restriction $\psi_{\bar{C}}$ of $\psi$ to $B-V(C)$.
Note that patching does not change the genus of the surface, cannot turn a non-essential subgraph into an essential one,
and if a cycle surrounds a cuff in $\Sigma'$ but not in $\Sigma$, then it is essential in $\Sigma$, and thus the assumptions
of the theorem are satisfied by the embedding of $G'$ in $\Sigma'$.
If $\psi_{\bar{C}}$ does not extend to a $3$-coloring of $G'$, we obtain a subgraph $G'_0\subseteq G'$ with $w_\eta(G'_0)\le w_\eta(B-V(C))$
such that $\psi_{\bar{C}}$ does not extend to a $3$-coloring of $G'_0$.  Then, it suffices to set $G_0=G'_0\cup C$
and observe that $w_\eta(G_0)\le w_\eta(B)$ and that $\psi$ does not extend to a $3$-coloring of $G_0$.

Hence, we can assume that for every boundary cycle $C$, the restriction of $\psi$ to $B-V(C)$ extends to a $3$-coloring of $G'$,
and thus no boundary cycle of $G'$ is $\psi$-irrelevant.  Recall that if $\Sigma$ has genus zero, then it has at least two holes,
and thus since no boundary cycle of $G'$ is $\psi$-irrelevant, $\psi$ is not locally blocked.  If $\Sigma$ has non-zero genus, then by the assumptions $G'$ is boundary-linked,
and thus by Lemma~\ref{lemma-nonblock}, we again conclude that $\psi$ is not locally blocked.

Therefore, $\psi$ extends to a $3$-coloring of $G'$ (and thus also $G$) by Lemma~\ref{lemma-gen}.
\end{proof}

We can allow the boundary cycles to have length $4$ by a minor modification to the algorithm of Theorem~\ref{thm-mainalg-spec}: Before
running the algorithm,
subdivide an edge in each cuff of length $4$ by a vertex and extend $\psi$ by giving the new vertex a color distinct from the colors of
its neighbors (this does not affect boundary-linkedness, since we assume absence of non-contractible non-boundary $4$-cycles).

\begin{corollary}\label{cor-mainalg-spec}
For every surface $\Sigma$ and integer $n$, there exists an integer $N_{\Sigma,n}$ and a linear-time algorithm with input
\begin{itemize}
\item $G$: a triangle-free graph $2$-cell embedded in $\Sigma$ with boundary cycles $B$ of total length at most $n$, such that
\begin{itemize}
\item every non-boundary $4$-cycle in $G$ is contractible,
\item every connected essential subgraph of $G$ has more than $N_{\Sigma,n}$ edges, and
\item either $\Sigma$ has genus zero or $G$ is boundary-linked; and,
\end{itemize}
\item $\psi$: a $3$-coloring of $B$,
\end{itemize}
which correctly decides whether there exists a $3$-coloring $\varphi$ of $G$ extending $\psi$.
\end{corollary}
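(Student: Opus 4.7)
The plan is to reduce to Theorem~\ref{thm-mainalg-spec} by executing the subdivision trick outlined in the paragraph immediately preceding the statement: every boundary $4$-cycle is forbidden by Theorem~\ref{thm-mainalg-spec} because it is non-contractible, so each such cycle must be converted into a $5$-cycle by inserting a single degree-$2$ vertex, whose color is forced by its two neighbors. Concretely, for each cuff $C$ whose boundary cycle has length exactly $4$, I would pick one edge $uv$ of $C$, delete $uv$, and introduce a new vertex $w$ embedded on the old curve of $uv$ together with edges $uw$ and $wv$, so that $C$ is now traced by a $5$-cycle of the modified graph $G'$. I would then extend $\psi$ to $\psi'$ by setting $\psi'(w)$ equal to the unique element of $\{1,2,3\}\setminus\{\psi(u),\psi(v)\}$; this is well-defined because $\psi(u)\ne\psi(v)$, since $uv\in E(G)$ and $\psi$ is a proper coloring.

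Next, I would verify that $(G',\psi')$ satisfies the hypotheses of Theorem~\ref{thm-mainalg-spec}, choosing $N_{\Sigma,k}$ to be the maximum of $w_\eta(B')$ over all multisets of boundary-cycle lengths summing to at most $2k$ (a finite constant depending only on $\Sigma$ and $k$). The graph $G'$ is still $2$-cell embedded and triangle-free, because any triangle through $w$ would force $u$ and $v$ to share a neighbor in $G$, producing a triangle in $G$ containing the edge $uv$. By the same argument, any non-contractible $4$-cycle of $G'$ either avoids every new vertex (and hence already existed in $G$, contradicting the hypothesis of the corollary) or passes through some $w$ (contradicting triangle-freeness of $G$). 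The union $B'$ of boundary cycles of $G'$ has at most $k+\lfloor k/4\rfloor\le 2k$ vertices. Finally, any connected essential subgraph $H'$ of $G'$ on $m$ edges reduces, by repeatedly deleting leaves at subdivision vertices and then contracting the remaining interior subdivision vertices, to a connected essential subgraph of $G$ with at most $m$ edges; since $G$ has no such subgraph with at most $N_{\Sigma,k}$ edges, neither does $G'$, and in particular $G'$ has none with at most $w_\eta(B')$ edges.

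I would then run the linear-time algorithm of Theorem~\ref{thm-mainalg-spec} on $(G',\psi')$ and return its yes/no answer; the overhead of the preprocessing is clearly $O(k)$, so the total running time remains linear. Correctness follows from the bijection between $3$-colorings of $G$ extending $\psi$ and $3$-colorings of $G'$ extending $\psi'$: given $\varphi$ extending $\psi$, setting $\varphi'(w):=\psi'(w)$ for each subdivision vertex $w$ yields a proper coloring of $G'$ by construction of $\psi'$; conversely, any $\varphi'$ extending $\psi'$ restricts to a proper coloring of $G$ because $\varphi'(u)=\psi(u)\ne\psi(v)=\varphi'(v)$ on each subdivided edge $uv$.

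The step I expect to require the most care is the transfer of the no-small-connected-essential-subgraph condition from $G$ to $G'$: it relies on the fact that both edge subdivision and leaf-pruning preserve essentiality (a topological property of the subgraph's regular neighborhood in $\Sigma$) while not increasing the edge count, which is routine but should be stated precisely so that the constant $N_{\Sigma,k}$ is defined unambiguously in terms of $\Sigma$ and $k$ alone.
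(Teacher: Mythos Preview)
Your proposal is correct and follows precisely the approach sketched in the paper: subdivide each boundary $4$-cycle by a single vertex, extend $\psi$ accordingly, and apply Theorem~\ref{thm-mainalg-spec}. You have merely supplied the routine verifications (triangle-freeness, absence of non-contractible $4$-cycles, transfer of the essential-subgraph bound, and the coloring bijection) that the paper leaves implicit.
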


Next, we deal with other non-contractible $4$-cycles as well as the boundary-linkedness assumption.
\begin{theorem}\label{thm-mainalg-spec4}
For every surface $\Sigma$ and every integer $n\ge 2$, letting $N_{\Sigma,n}$ be as in Corollary~\ref{cor-mainalg-spec},
there exists a linear-time algorithm with input
\begin{itemize}
\item $G$: a triangle-free graph $2$-cell embedded in $\Sigma$ with boundary cycles $B$ of total length at most $n$, such that
every connected essential subgraph of $G$ has more than $N_{\Sigma,n}+2n$ edges, and
\item $\psi$: a $3$-coloring of $B$,
\end{itemize}
which correctly decides whether there exists a $3$-coloring $\varphi$ of $G$ extending $\psi$.  Furthermore, if $\Sigma$ is the cylinder
(the sphere with two holes), then we can omit the restriction on the connected essential subgraphs.
\end{theorem}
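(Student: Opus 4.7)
The plan is to prove the theorem in two stages: first establish the ``furthermore'' claim for the cylinder, and then reduce the general statement to Corollary~\ref{cor-mainalg-spec} by peeling off non-contractible $4$-cycles using the cylinder-case algorithm as a subroutine.

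\emph{Cylinder case.} Apply Lemma~\ref{lemma-splitcyl} with $d=4$ to the cylinder $G$ with boundaries $B_1,B_2$ to obtain non-contractible cycles $C_0=B_1,C_1,\ldots,C_m=B_2$ of length at most $4$, with consecutive ones either sharing a vertex or bounding a sub-cylinder whose interior contains no non-contractible $(\le\!4)$-cycle. I would then maintain by dynamic programming a set $X_i$ of $3$-colorings of $C_i$ that extend through the portion of $G$ between $B_1$ and $C_i$ consistently with $\psi|_{B_1}$, initializing $X_0=\{\psi|_{B_1}\}$ and returning whether $\psi|_{B_2}\in X_m$. The core subroutine decides, for each pair of $3$-colorings of $(C_i,C_{i+1})$, whether it extends to the sub-cylinder $\Omega_i$ between them. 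If $C_i\cap C_{i+1}\neq\emptyset$, then $|V(C_i\cup C_{i+1})|\le 7$, and after enumerating colorings the remaining faces of $\Omega_i\setminus(C_i\cup C_{i+1})$ are disks, so extension is decided by Corollary~\ref{cor-mainalg-disk}. Otherwise, apply Lemma~\ref{lemma-qstruc} to $\Omega_i$ with $\nu$ chosen large enough to dominate the $N_{\Sigma',k'}$ of Corollary~\ref{cor-mainalg-spec} for every cylinder $\Sigma'$ and boundary bound $k'$ that can arise; then enumerate the constantly many $3$-colorings of the bounded-size subgraph $H$ returned, and for each face $h$ of $H$ apply Corollary~\ref{cor-mainalg-disk} if $\Sigma_h$ is a disk and Corollary~\ref{cor-mainalg-spec} if $\Sigma_h$ is a cylinder. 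The hypotheses of Corollary~\ref{cor-mainalg-spec} are met because the absence of non-contractible $(\le\!4)$-cycles in the interior of $\Omega_i$ is inherited by $G_h$ and because $b(\Omega_i)$ dominates $b(G_h)$.

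\emph{General case.} By the hypothesis on $G$, every non-contractible $4$-cycle other than the boundary cycles has only $4\le N_{\Sigma,k}+8$ edges and is therefore non-essential, so it surrounds some cuff. For each cuff $C$ of $\Sigma$, iteratively use the data structure of Lemma~\ref{lemma-dataemb}(c) with $d=4$ to search for a non-contractible $4$-cycle $K$ strictly surrounding the current position of $C$; since $G$ is triangle-free, any closed $4$-walk with the desired homotopy class is in fact a cycle, because a repeated edge would make the walk null-homotopic. When such a $K$ is found, handle the cylindrical region between the previous $C$ and $K$ by the cylinder algorithm above, remove its interior from the data structure, and promote $K$ to the new boundary cycle; repeat. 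Once no such $K$ is found for any cuff, the remaining graph has no non-contractible $4$-cycles other than its boundary cycles and still has no small connected essential subgraph (the $+8$ in the hypothesis absorbs the edges of the new boundary $4$-cycles), so Corollary~\ref{cor-mainalg-spec} decides the remaining extension question. Dynamic programming over the at most $18$ proper $3$-colorings of each intermediate $4$-cycle threads the answer through the sequence of subroutine calls.

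\emph{Main obstacle.} The primary difficulty is maintaining overall linear running time under the iterated peeling. Because consecutive peels process disjoint cylindrical regions, the cumulative cost of the cylinder-case invocations sums linearly in $|V(G)|$; the data-structure updates of Lemma~\ref{lemma-dataemb} are amortized constant; and the number of partial colorings propagated across each interface is bounded by a constant depending only on $\Sigma$ and $k$. The most delicate step is calibrating the parameter $\nu$ in Lemma~\ref{lemma-qstruc} so that the sub-cylindrical faces produced inside each $\Omega_i$ satisfy the essential-subgraph hypothesis of Corollary~\ref{cor-mainalg-spec} without triggering a further round of peeling.
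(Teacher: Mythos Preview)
Your two-stage plan mirrors the paper's exactly: handle the cylinder first, then reduce the general case to Corollary~\ref{cor-mainalg-spec} by isolating the non-contractible $4$-cycles around each cuff and invoking the cylinder algorithm on the annular pieces.

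There is, however, one concrete error. In the cylinder case you invoke Lemma~\ref{lemma-splitcyl} with $d=4$, but that lemma requires the two boundary cycles of the input to have length at most $d$, and in Theorem~\ref{thm-mainalg-spec4} the boundaries can have total length up to $k$. The paper takes $d=k$ instead: the cycles $C_i$ then have length at most $k$ (so your bound $|V(C_i\cup C_{i+1})|\le 7$ becomes $\le 2k-1$, still a constant), and each subcylinder with disjoint boundaries is free of non-contractible $(\le\!k)$-cycles, hence in particular of non-contractible $4$-cycles, which is what you need downstream. With this fix your argument goes through.

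Tactically you diverge from the paper in two places, both harmless. First, on each subcylinder $\Omega_i$ with disjoint boundaries, the paper does not call Lemma~\ref{lemma-qstruc}; it simply tests whether the two boundaries are within distance $N_{\Sigma,2k}$. If so, a shortest path cuts $\Omega_i$ into one or two disks handled by Corollary~\ref{cor-mainalg-disk}; if not, $\Omega_i$ already has no small connected essential subgraph and Corollary~\ref{cor-mainalg-spec} applies directly. Your route via Lemma~\ref{lemma-qstruc} works too and sidesteps the calibration issue you flag, since one can just set $\nu(\Pi,m)=N_{\Pi,m}+1$. Second, in the general case the paper does not iterate: it applies Lemma~\ref{lemma-testfree1} once per cuff $C$ (with $k=4$, in the surface $\Sigma+\widehat C$) to obtain in a single linear pass the outermost $(\le\!4)$-walk $W_C$ bounding a disk $\Lambda_C\supseteq C$ such that $\{\Lambda_C\}$ is $4$-free in $G-\Lambda_C$. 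The annuli $\Lambda_{C_1},\ldots,\Lambda_{C_t}$ are automatically pairwise disjoint---this is precisely where the $+8$ is spent, since two touching $W_{C_i}$'s would form a connected essential subgraph with at most $8$ edges---and the remainder $G'$ satisfies the hypotheses of Corollary~\ref{cor-mainalg-spec} (with the same parameter $k$, since the new boundaries are no longer than the old). Your iterative peeling reaches the same endpoint but needs the amortization argument you sketch to stay linear; invoking Lemma~\ref{lemma-testfree1} avoids that bookkeeping entirely.
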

\begin{proof}
By Corollary~\ref{cor-mainalg-disk}, we can assume that $\Sigma$ has non-zero genus or at least two cuffs.

Suppose first that $\Sigma$ is the cylinder.  Apply the algorithm of Lemma~\ref{lemma-splitcyl} with $d=k$, and let $C_0$, $C_1$, \ldots, $C_m$
be the resulting cycles.  For $i=1,\ldots, m$, we can decide which $3$-colorings of $C_{i-1}\cup C_i$ extend to the subgraph $G_i$ of $G$ drawn between $C_{i-1}\cup C_i$,
as follows.  If the distance between $C_{i-1}$ and $C_i$ is at most $N_{\Sigma,2n}$, then let $P$ be a shortest path between $C_{i-1}$ and $C_i$.
Using Corollary~\ref{cor-mainalg-cuttodisk} to $G_i$ with $Q=C_{i-1}\cup P\cup C_i$, we can decide which
$3$-colorings of $C_{i-1}\cup C_i$ extend to a $3$-coloring of $G_i$.
If the distance between $C_{i-1}$ and $C_i$ is greater than $N_{\Sigma,2n}$, then since $\Sigma$ is a cylinder,
we conclude that $G_i$ does not contain any connected essential subgraph with at most $N_{\Sigma,2n}$ vertices, and thus
which $3$-colorings of $C_{i-1}\cup C_i$ extend to $G_i$ can be decided using the algorithm of Corollary~\ref{cor-mainalg-spec}.

Finally, we can combine the information by a straightforward dynamic programming to determine whether $\psi$ extends to a $3$-coloring of $G$:
For $i=0,\ldots,m$, we compute the set $\Psi_i$ of colorings of $C_0\cup C_i$ that extend to a $3$-coloring of the subgraph of $G$
between $C_0$ and $C_i$, and then we test whether $\psi\in\Psi_m$.  To determine whether a coloring $\theta$ belongs to $\Psi_i$ for
$i\ge 1$, it suffices to check whether there exists a coloring $\theta'\in\Psi_{i-1}$ such that $\theta'\restriction V(C_0)=\theta\restriction V(C_0)$
and the restriction of $\theta\cup\theta'$ to $C_{i-1}\cup C_i$ extends to a $3$-coloring of $G_i$.

Hence, we can assume that $\Sigma$ is not the cylinder.  Note that every non-contractible 4-cycle in $G$ surrounds a cuff,
since $G$ does not contain connected essential subgraphs with $4$ edges.
For each cuff $C$ of $\Sigma$, let $\Sigma_C$ be the surface obtained from $\Sigma$ by patching $C$, and let $f_C$ be the face of $G$ bounded by $C$
in its embedding in $\Sigma_C$.  Let $W_C$ be the closed walk obtained by applying the algorithm of Lemma~\ref{lemma-testfree1} to $\{f_C\}$ with $k=n$, and
let $\Lambda_C$ be the open disk bounded by $W_C$.  Note that for any distinct cuffs $C_1$ and $C_2$, the closures of $\Lambda_{C_1}$ and $\Lambda_{C_2}$
are disjoint, since every connected essential subgraph of $G$ has more than $2n$ edges, and since $\Sigma$ is not the cylinder.

Let $C_1$, \ldots, $C_t$ be the cuffs of $\Sigma$, let $\Sigma'=\Sigma\setminus(\Lambda_{C_1}\cup\ldots\cup \Lambda_{C_t})$ and let $G'$ be the subgraph of
$G$ embedded in $\Sigma'$.  Note that $\Sigma'$ is homeomorphic to $\Sigma$, that $G'$ does not contain any non-contractible $4$-cycles other than
the boundary cycles, that if $\Sigma$ has positive genus, then $G'$ is boundary-linked, and that $G'$ does not contain any connected essential subgraphs with at most $N_{\Sigma,n}$ edges.
By Corollary~\ref{cor-mainalg-spec}, we can decide which $3$-colorings of the boundary cycles $W_1\cup \ldots\cup W_t$ of $G'$ extend to a $3$-coloring of $G'$.
For $i=1,\ldots, t$, we can decide which $3$-colorings of $C_i\cup W_i$ extend to the subgraph of $G$ drawn between $C_i$ and $W_i$,
by the cylinder case when $W_i$ and $C_i$ are vertex-disjoint, and by Corollary~\ref{cor-mainalg-cuttodisk} otherwise.
By combining this information, we can decide whether $\psi$ extends to a $3$-coloring of $G$.
\end{proof}

To give the full algorithm, it now suffices to deal with the essential subgraphs.

\begin{proof}[Proof of Theorem~\ref{thm-mainalg}]
Without loss of generality, we can assume that the embedding of $G$ in $\Sigma$ is $2$-cell.

Let $\nu(\Pi,n)=N_{\Pi,n}+2n$ for every surface $\Pi$ and integer $n$, where $N_{\Pi,n}$ is as in Corollary~\ref{cor-mainalg-spec}.  Apply the algorithm of Lemma~\ref{lemma-qstruc} to $G$, obtaining a subgraph $H$ of $G$.
For every face $h$ of $H$, determine which $3$-colorings of the boundary of $h$ extend to a $3$-coloring of $G_h$ by applying the algorithm of Theorem~\ref{thm-mainalg-spec4}.
By combining this information, we can determine which $3$-colorings of $H$ extend to a $3$-coloring of $G$, and thus also whether $\psi$ extends to a $3$-coloring of $G$.
\end{proof}

Let us remark that in Theorem~\ref{thm-mainalg-spec} (as well as in all the algorithms described before Theorem~\ref{thm-mainalg-spec}),
in case the precoloring did not extend to a $3$-coloring
of the whole graph, we were able to provide a certificate for this fact---a ``near-quadrangulated'' subgraph (of bounded $w_\eta$-weight)
to which the $3$-coloring does not extend.  In the algorithm of Theorem~\ref{thm-mainalg-spec4}, we use dynamic programming,
making the structure of such a certificate less clear.  Moreover, an inspection of~\cite{trfree6} shows that for a similar reason,
it is not straightforward to certify that the precoloring does not extend to the near-quadrangulated subgraph, either.  With a substantial additional
work, these issues can be dealt with; Dvo\v{r}\'ak and Lidick\'{y}~\cite{cylgen-part3} show that the non-extendability
of a precoloring of the boundary cycles in a triangle-free graph embedded in a surface $\Sigma$ can be certified by
a subgraph that has a description of size bounded by a constant depending only on $\Sigma$ and the number of precolored vertices.

\section{Finding a $3$-coloring}\label{sec-repalg}

Note that the arguments used in the proof of Theorem~\ref{thm-mainalg} (e.g., the proof of Lemma~\ref{lemma-disk})
do not give a way to find the $3$-coloring in case one exists.  Hence, we need additional ideas to get such a $3$-coloring.

The basic case we need to consider is that of graphs of girth at least $5$.  It is possible to algorithmize our arguments
from the previous papers of this series~\cite{trfree2,trfree3} that deal with this case, but it is not entirely straightforward.
Fortunately, Postle~\cite{postledistr-focs} came up with a more elegant argument.  The following presentation is a bit simpler
than Postle's, as his one adds several additional ideas to obtain an efficient distributed algorithm.

Let $G$ be a graph drawn in a surface $\Sigma$ and let $F$ be an induced subgraph of $G$.  Let $\partial F$ denote the set of
vertices $v\in V(F)$ such that either $v$ is drawn in the boundary of $\Sigma$, or $v$ is incident with an edge of $E(G)\setminus E(F)$.
We say that a vertex $u\in V(F)\setminus\partial F$ is \emph{$F$-irrelevant} if every $3$-coloring of $\partial F$ that extends to a $3$-coloring
of $F-u$ also extends to a $3$-coloring of $F$.  We say that $F$ is \emph{reducible} if it contains an $F$-irrelevant vertex.
\begin{observation}\label{obs-redu}
Let $G$ be a graph drawn in a surface, let $\psi$ be a function from boundary vertices of $G$ to $\{1,2,3\}$,
and let $F$ be an induced subgraph of $G$.  If $u$ is an $F$-irrelevant vertex, then given a $3$-coloring $\varphi'$ of $G-u$,
we can obtain a $3$-coloring $\varphi$ of $G$ by changing the colors of vertices of $V(F)\setminus (\{u\}\cup \partial F)$ and assigning a color to $u$.
In particular, if $\varphi'$ extends $\psi$, then so does $\varphi$.
\end{observation}

The key insight is that if $F$ is sufficiently large compared to $\partial F$ and has girth at least five, then it is reducible.
\begin{lemma}\label{lemma-redu}
Let $G$ be a graph of girth at least five embedded in a surface $\Sigma$ of genus $g$ and let $F$ be an induced subgraph of $G$.
If $|V(F)|>\tfrac{5}{s(5)}((48\eta+5)|\partial F|+120\eta g)$, then $F$ is reducible.
\end{lemma}
\begin{proof}
Let $\Sigma'$ be the surface obtained by patching all cuffs of $\Sigma$, and then for each vertex $v\in\partial F$, drilling a hole
next to $v$ so that the drawing of $F$ intersects the resulting cuff exactly in $v$.  Let $F'$ be obtained from $F$ by,
for each vertex $v\in \partial F$, adding a 5-cycle $B_v$ tracing the incident cuff of $\Sigma'$, and let $B=\bigcup_{v\in\partial F} B_v$.
By Theorem~\ref{thm-almstruct}, there exists $H\subseteq F'$ such that $B\subseteq H$,
$$w_\eta(H)\le w_\eta(B)\le (48\eta+5)|\partial F|+120\eta g,$$
and every $3$-coloring of $B$ that extends to a $3$-coloring of $H$ also extends to a $3$-coloring of $F'$.
Since $F'$ has girth at least five, $w_\eta(h)\ge \frac{s(5)}{5}|h|$ holds for every face $h$ of $H$,
and thus $|V(H)|\le \frac{5}{s(5)}w_\eta(H)$.  Hence, we have $|V(F)|>|V(H)|$.  Every vertex in $V(F)\setminus V(H)$
is $F$-irrelevant, and thus $F$ is reducible.
\end{proof}

Let us recall the well-known result that graphs on surfaces have sublinear balanced separators.
A \emph{separator} in a graph $G$ is a pair $(A,B)$ of induced subgraphs such that $G=A\cup B$.
The \emph{order} of the separator is $|V(A\cap B)|$, and the separator is \emph{balanced} if
$|V(A)\setminus V(B)|\le \tfrac{2}{3}|V(G)|$ and $|V(B)\setminus V(A)|\le \tfrac{2}{3}|V(G)|$.
\begin{theorem}[Gilbert et al.~\cite{gilbert}]\label{thm-balsep}
For every $g\ge 0$, there exists a constant $\sigma_g=O(\sqrt{g})$ such that every graph $G$ drawn
in a surface of genus $g$ has a balanced separator of order at most $\sigma_g\sqrt{|V(G)|}$.
\end{theorem}

We can use this result to obtain an induced subgraph with a small boundary as follows.
\begin{lemma}\label{lemma-decr}
Let $G$ be a graph drawn in a surface $\Sigma$ of genus $g$, and let $F$ be an induced subgraph of $G$.
Let $a$ and $b$ be positive real numbers and let $c=6a\sigma_g+b/\sigma_g$.
If $|V(F)|\ge a|\partial F|+b-c\sqrt{|V(F)|}$ and $|V(F)|>9\sigma^2_g$, then there exists a proper induced subgraph
$F'$ of $G$ such that $|V(F)|/3\le |V(F')|<|V(F)|$ and $|V(F')|\ge a|\partial F'|+b-c\sqrt{|V(F')|}$.
\end{lemma}
\begin{proof}
Let $n=|V(F)|$.
For an induced subgraph $H$ of $G$, let us define $e(H)=a|\partial H|+b-|V(H)|$, so that $e(F)\le c\sqrt{n}$.
By Theorem~\ref{thm-balsep}, $F$ has a balanced separator $(A,B)$ of order at most $\sigma_g\sqrt{n}$.
Since $|V(A)|,|V(B)|\ge n/3$, we have
$\sqrt{|V(A)|}+\sqrt{|V(B)|}\ge \bigl(\sqrt{2/3}+\sqrt{1/3}\bigl)\sqrt{n}>\frac{4}{3}\sqrt{n}$.
Moreover, note that $\partial A\subseteq (\partial F\setminus V(B))\cup V(A\cap B)$ and $\partial B\subseteq (\partial F\setminus V(A))\cup V(A\cap B)$.
Hence,
\begin{align*}
e(A)+e(B)&=a(|\partial A|+|\partial B|)+2b-(|V(A)|+|V(B)|)\\
&\le a(|\partial F|+2\sigma_g\sqrt{n})+2b-n=e(F)+2a\sigma_g\sqrt{n}+b\\
&\le \bigl(c+2a\sigma_g+\tfrac{b}{\sqrt{n}}\bigr)\sqrt{n}\le \bigl(c+2a\sigma_g+\tfrac{b}{3\sigma_g}\bigr)\sqrt{n}\\
&\le \tfrac{4}{3}c\sqrt{n}\le c(\sqrt{|V(A)|}+\sqrt{|V(B)|}).
\end{align*}
Consequently, there exists $F'\in \{A,B\}$ such that $e(F')\le c\sqrt{|V(F')|}$.
Also, we have $|V(F')|\le \max(|V(A)|,|V(B)|)\le \tfrac{2}{3}n+\sigma_g\sqrt{n}<n$, as required.
\end{proof}

Repeatedly applying Lemma~\ref{lemma-decr}, we obtain the following consequence.
\begin{corollary}\label{cor-decr}
Let $G$ be a graph drawn in a surface $\Sigma$ of genus $g$.  Let $a$ and $b$ be positive real numbers,
let $c=6a\sigma_g+b/\sigma_g$, and let $t\ge 9\sigma^2_g$ be a real number.
If $|V(G)|\ge a|\partial G|+b$ and $|V(G)|>t$, then there exists a proper induced subgraph
$F$ of $G$ such that $t/3<|V(F)|\le t$ and $|V(F)|\ge a|\partial F|+b-c\sqrt{|V(F)|}$.
\end{corollary}
In particular, for $t=\max(9\sigma^2_g,3c^2)$, $t/3<|V(F)|$ implies $c\sqrt{|V(F)|}<|V(F)|$,
an thus $|V(F)|>\tfrac{a}{2}|\partial F|+\tfrac{b}{2}$.  Let us now combine Corollary~\ref{cor-decr} with Lemma~\ref{lemma-redu},
setting $a=\tfrac{10(48\eta+5)}{s(5)}$ and $b=\tfrac{1200\eta g}{s(5)}$.
\begin{corollary}
For every surface $\Sigma$ and integer $n$, there exists a constant $t_{\Sigma,n}$ such that
if $G$ is a graph of girth at least five drawn in $\Sigma$ with at most $n$ boundary vertices and $|V(G)|>t_{\Sigma,n}$,
then $G$ contains a reducible induced subgraph with at most $t_{\Sigma,n}$ vertices.
\end{corollary}

Note that for a fixed $\Sigma$ and $n$, there are only finitely many graphs with at most $t_{\Sigma,n}$ vertices
and with some vertices marked as boundary ones, and for each of them, we can check whether they appear in $G$
in linear time using the algorithm of Eppstein~\cite{bib-eppstein99}.  If $G$ contains such an induced subgraph $F$ which is reducible,
and $u\in V(F)\setminus\partial F$ is $F$-irrelevant, we can solve the problem of finding a $3$-coloring of $G$ that extends a precoloring of
its boundary vertices by recursing on $G-u$ and extending the obtained coloring using Observation~\ref{obs-redu}
(since $|V(F)|$ is bounded by a constant, we can do this by brute force).  Furthermore, once the recursion reaches a graph with at most $t_{\Sigma,n}$
vertices, we can find the $3$-coloring (or decide it does not exist) by brute force.
\begin{corollary}\label{cor-girth5}
For every surface $\Sigma$ and integer $k$, there exists a quadratic-time algorithm with input
\begin{itemize}
\item $G$: a graph of girth at least five embedded in $\Sigma$ with at most $k$ boundary vertices, and
\item $\psi$: a function from boundary vertices of $G$ to $\{1,2,3\}$,
\end{itemize}
which correctly decides whether there exists a $3$-coloring $\varphi$ of $G$ such that $\varphi(v)=\psi(v)$ for every boundary vertex $v$ of $G$,
and outputs such a coloring in the affirmative case.
\end{corollary}
Let us remark that one can improve the complexity of the algorithm from Corollary~\ref{cor-girth5} to linear by either carefully keeping track of
potentially reducible subgraphs or applying further ideas of Postle~\cite{postledistr-focs}.  However, as we are anyway only aiming to
obtain a quadratic-time algorithm for the triangle-free case, we will not complicate the exposition by describing these improvements.

\begin{proof}[Proof of Theorem~\ref{thm-mainalgrep}]
We construct the algorithm by induction on the complexity of the surface.  Without loss of generality, we can assume that every boundary
cycle has length at least five (otherwise, subdivide the shorter boundary cycles and color the resulting vertices of degree two
arbitrarily).  First, we use the algorithm of Theorem~\ref{thm-mainalg}
to determine whether $\psi$ extends to a $3$-coloring of $G$.  If it does not, we output this answer and end.  Hence, suppose that $\psi$ extends to a $3$-coloring of $G$.

In linear time, we decide whether $G$ contains a non-contractible $(\le\!5)$-cycle $K$ distinct from the boundary cycles.
If that is the case, we let $H=B\cup K$ and we use the algorithm of Theorem~\ref{thm-mainalg} to find a $3$-coloring $\psi'$ of $H$ that
extends to a $3$-coloring of $G$ and whose restriction to $B$ is equal to $\psi$.  For every face $h$ of $H$, we determine
a $3$-coloring of $G_h$ that extends $\psi'$ by a recursive call.  By combining these colorings, we obtain a $3$-coloring of $G$
that extends $\psi$.  Hence, we can assume that $G$ contains no non-contractible $(\le\!5)$-cycle distinct from the boundary cycles.
Similarly, we can assume that every boundary cycle is induced,
and that the distance between any two boundary cycles is at least three.

Next, in linear time we decide whether $G$ contains a contractible $(\le\!5)$-cycle $K$
that does not bound a face.  If such a cycle $K$ exists, then let $\Lambda$ be the open disk bounded by $K$.  We first extend $\psi$ to a $3$-coloring of $G-\Lambda$ by
a recursive call, and then extend the coloring to $G_\Lambda$ using the algorithm of Theorem~\ref{thm-ext45}.
Hence, we can assume that every contractible $(\le\!5)$-cycle in $G$ bounds a face.  Similarly, we can assume that every vertex of $G$ of degree at most two belongs to $B$.

If $G$ does not contain any $4$-face, then $G$ has girth at least five, and
we can apply the algorithm of Corollary~\ref{cor-girth5}.  Hence, we can assume $G$ has a $4$-face $f=v_1v_2v_3v_4$.  Since the boundary cycles of $G$ are induced and the distance between them
is at least three, we can assume that $v_2\not\in V(B)$.  Suppose that $v_1, v_3\in V(B)$.  Then both $v_1$ and $v_3$ belong to the same boundary cycle $C$, and since $G$ is triangle-free,
the distance between $v_1$ and $v_3$ in $C$ is at least two.  Hence, if $h$ is a face of $H=B+v_1v_2v_3$, then either $\Sigma_h$ is strictly less complex than $h$, or $\Sigma_h$ is homeomorphic to
$\Sigma$ and all the boundary cycles of $G_h$ are at most as long as the corresponding boundary cycles of $G$.  We use the algorithm of Theorem~\ref{thm-mainalg} to find a $3$-coloring $\psi'$ of $H$ that
extends to a $3$-coloring of $G$ and whose restriction to $B$ is equal to $\psi$, and extend $\psi'$ to a $3$-coloring of $G$ by recursive calls on the graphs $G_h$ for all faces $h$ of $H$.

Finally, suppose that that $|\{v_1,v_3\}\cap V(B)|\le 1$.   By symmetry, we can assume that $v_1\not\in V(B)$.  Using Theorem~\ref{thm-mainalg}, we find a $3$-coloring $\psi'$ of $B+v_1v_2v_3v_4$ that
extends to a $3$-coloring of $G$ and whose restriction to $B$ is equal to $\psi$.  Note that either $\psi'(v_1)=\psi'(v_3)$ or $\psi'(v_2)=\psi'(v_4)$.  By symmetry, we can assume the former.
Suppose that $G$ contains a path $P$ of length at most $3$ joining $v_1$ with $v_3$ and disjoint from $\{v_2,v_4\}$, and let $K$ be the cycle $P+v_1v_2v_3$.  Since $|K|\le 5$ and $K$ is distinct
from the boundary cycles, it follows that $K$ is contractible, and thus $K$ bounds a face.  However, then $v_2$ has degree two, which is a contradiction.
Therefore, there exists no such path, and thus the graph $G'$ obtained from $G$ by identifying $v_1$ with $v_3$ is triangle-free.  Furthermore, since $\psi'$ extends to a $3$-coloring of $G$,
it follows that $\psi$ extends to a $3$-coloring of $G'$.  We can find such a $3$-coloring of $G'$ by a recursive call, and extend it to $G$ by giving $v_1$ and $v_3$ the color of the corresponding
vertex of $G'$.

To analyze the time complexity of the algorithm, note that in each case, we either call the quadratic algorithm of Corollary~\ref{cor-girth5}, or we
spend a linear time processing $G$ and recurse on several graphs $G_1$, \ldots, $G_k$,
such that the sum of the numbers of faces of $G_1$, \ldots, $G_k$ is at most the number of the faces of $G$.  Since the number of faces of $G$ is linear in the number of its
vertices, this implies that the time complexity of the algorithm is quadratic.
\end{proof}


\begin{thebibliography}{10}

\bibitem{DvoKawTho}
{\sc Dvo\v{r}\'ak, Z., Kawarabayashi, K., and Thomas, R.}
\newblock Three-coloring triangle-free planar graphs in linear time.
\newblock {\em Trans. on Algorithms 7\/} (2011), article no. 41.

\bibitem{trfree6}
{\sc Dvo\v{r}\'ak, Z., Kr\'al', D., and Thomas, R.}
\newblock Three-coloring triangle-free graphs on surfaces {VI}.
  $3$-colorability of quadrangulations.
\newblock {\em ArXiv 1509.01013\/} (Sept. 2015).

\bibitem{trfree2}
{\sc Dvo\v{r}\'ak, Z., Kr\'al', D., and Thomas, R.}
\newblock Three-coloring triangle-free graphs on surfaces {II}. $4$-critical
  graphs in a disk.
\newblock {\em Journal of Combinatorial Theory, Series B 132\/} (2018), 1--46.

\bibitem{trfree3}
{\sc Dvo\v{r}\'ak, Z., Kr\'al', D., and Thomas, R.}
\newblock Three-coloring triangle-free graphs on surfaces {III}. {G}raphs of
  girth five.
\newblock {\em Journal of Combinatorial Theory, Series B 145\/} (2020),
  376--432.

\bibitem{trfree4}
{\sc Dvo\v{r}\'ak, Z., Kr\'al', D., and Thomas, R.}
\newblock Three-coloring triangle-free graphs on surfaces {IV}. {B}ounding face
  sizes of $4$-critical graphs.
\newblock {\em Journal of Combinatorial Theory, Series B\/} (2020).
\newblock In press.

\bibitem{cylgen-part3}
{\sc Dvo\v{r}\'ak, Z., and Lidick\'y, B.}
\newblock Fine structure of $4$-critical triangle-free graphs {III}. {G}eneral
  surfaces.
\newblock {\em SIAM J. Discrete Math. 32\/} (2018), 94--105.

\bibitem{bib-eppstein99}
{\sc Eppstein, D.}
\newblock Subgraph isomorphism in planar graphs and related problems.
\newblock {\em Journal of Graph Algorithms and Applications 3\/} (1999), 1--27.

\bibitem{garey1979computers}
{\sc Garey, M., and Johnson, D.}
\newblock {\em {Computers and Intractability: A Guide to the Theory of
  {NP}-completeness}}.
\newblock WH Freeman \& Co. New York, NY, USA, 1979.

\bibitem{gilbert}
{\sc Gilbert, J., Hutchinson, J., and Tarjan, R.}
\newblock A separator theorem for graphs of bounded genus.
\newblock {\em Journal of Algorithms 5\/} (1984), 391--407.

\bibitem{grotzsch1959}
{\sc Gr{\"o}tzsch, H.}
\newblock Ein {D}reifarbensatz f\"{u}r dreikreisfreie {N}etze auf der {K}ugel.
\newblock {\em Math.-Natur. Reihe 8\/} (1959), 109--120.

\bibitem{mohar1999linear}
{\sc Mohar, B.}
\newblock A linear time algorithm for embedding graphs in an arbitrary surface.
\newblock {\em SIAM Journal on Discrete Mathematics 12\/} (1999), 6--26.

\bibitem{postledistr-focs}
{\sc Postle, L.}
\newblock Linear-time and efficient distributed algorithms for list coloring
  graphs on surfaces.
\newblock In {\em 60th {IEEE} Annual Symposium on Foundations of Computer
  Science, {FOCS} 2019, Baltimore, Maryland, USA, November 9-12, 2019\/}
  (2019), {IEEE} Computer Society, pp.~929--941.

\bibitem{thomassen-surf}
{\sc Thomassen, C.}
\newblock The chromatic number of a graph of girth 5 on a fixed surface.
\newblock {\em J. Combin. Theory, Ser.~B 87\/} (2003), 38--71.

\bibitem{Youngs}
{\sc Youngs, D.}
\newblock 4-chromatic projective graphs.
\newblock {\em Journal of Graph Theory 21\/} (1996), 219--227.

\end{thebibliography}
\end{document}